\newtheorem*{rep@theorem}{\rep@title}
\newcommand{\newreptheorem}[2]{%
\newenvironment{rep#1}[1]{%
 \def\rep@title{#2 \ref{##1}}%
 \begin{rep@theorem}}%
 {\end{rep@theorem}}}
\algnewcommand{\LeftComment}[1]{\Statex \(\triangleright\) #1}
\DeclarePairedDelimiter\floor{\lfloor}{\rfloor}
\newcommand{\lr}[1]{\langle #1 \rangle}
\newcommand{\ignore}[1]{}
\newcommand{\lv}[1]{$\mathsf{startView(#1)}$\xspace}
\newcommand{\wv}[1]{$\mathsf{wedgeView(#1)}$\xspace}
\newcommand{\barriersync}[1]{$\mathsf{barrier\text{-}sync(#1)}$\xspace}
\newcommand{\barrierready}[1]{$\mathsf{barrier\text{-}ready(#1)}$\xspace}
\newcommand{\elect}[1]{$\mathsf{elect(#1)}$\xspace}
\newcommand{\we}[1]{$\mathsf{exchangeState(#1)}$\xspace}
\def\HiLi{\leavevmode\rlap{\hbox to
\hsize{\color{gray!10}\leaders\hrule height .8\baselineskip
depth .5ex\hfill}}}
\tikzset{%
  remember picture with id/.style={%
    remember picture,
    overlay,
    save picture id=#1,
  },
  save picture id/.code={%
    \edef\pgf@temp{#1}%
    \immediate\write\pgfutil@auxout{%
      \noexpand\savepointas{\pgf@temp}{\pgfpictureid}}%
  },
  if picture id/.code args={#1#2#3}{%
    \@ifundefined{save@pt@#1}{%
      \pgfkeysalso{#3}%
    }{
      \pgfkeysalso{#2}%
    }
  }
}
\def\savepointas#1#2{%
  \expandafter\gdef\csname save@pt@#1\endcsname{#2}%
}
\def\tmk@labeldef#1,#2\@nil{%
  \def\tmk@label{#1}%
  \def\tmk@def{#2}%
}
\newlength\AlgIndent
\newcounter{mymark}
\newcommand\ColorLine{%
  \stepcounter{mymark}%
  \tikz[remember picture with id=mark-\themymark,overlay] {;}%
  \begin{tikzpicture}[remember picture,overlay]%
    \filldraw[gray!20]%
   let \p1=(pic cs:mark-\themymark), 
   \p2=(current page.east)  in 
   ([xshift=-\ALG@thistlm-0em,yshift=-0.7ex]0,\y1)  rectangle
   ++(\linewidth+\AlgIndent,\baselineskip); \end{tikzpicture}%
}%
\newcommand\ColorLinex{%
  \stepcounter{mymark}%
  \tikz[remember picture with id=mark-\themymark,overlay] {;}%
  \begin{tikzpicture}[remember picture,overlay]%
    \filldraw[gray!20]%
   let \p1=(pic cs:mark-\themymark), 
   \p2=(current page.east)  in 
   ([xshift=-\ALG@thistlm--3em,yshift=-0.7ex]0,\y1) 
   rectangle ++(\linewidth+\AlgIndent,\baselineskip); \end{tikzpicture}%
}%
\algnewcommand\CREQUIRE{\item[\setlength\AlgIndent{1.6em}\ColorLine\algorithmicrequire]}%
\algnewcommand\CENSURE{\item[\setlength\AlgIndent{1.6em}\ColorLine\algorithmicensure]}%
\algnewcommand\CSTATE{\State\ColorLine}%
\algnewcommand\CSTATEx{\Statex\ColorLinex}%
\algnewcommand\CCOMMENT{\Comment\ColorLine}%
\title{In Search for an Optimal Authenticated Byzantine Agreement} 
\titlerunning{In Search for an Optimal Authenticated Byzantine
Agreement} 
\author{Alexander Spiegelman}{Novi
Research}{sasha.spiegelman@gmail.com}{}{}
\authorrunning{Alexander Spiegelman} 
\keywords{Byzantine agreement; Optimistic; Asynchronous fallback} 
\begin{document}

\maketitle

\begin{abstract}

In this paper, we challenge the conventional approach of state machine
replication systems to design deterministic agreement
protocols in the eventually synchronous communication model.
We first prove that no such protocol can guarantee bounded
communication cost before the global stabilization time and propose
a different approach that hopes for the best (synchrony) but prepares
for the worst (asynchrony).
Accordingly, we design an \emph{optimistic} byzantine agreement
protocol that first tries an efficient deterministic algorithm that
relies on synchrony for termination only, and then, only if an agreement was not reached due to asynchrony, the protocol uses a
randomized asynchronous protocol for fallback that guarantees
termination with probability $1$.

We formally prove that our protocol achieves optimal communication
complexity under all network conditions and failure scenarios. 
We first prove a lower bound of $\Omega(ft+ t)$ for synchronous
deterministic byzantine agreement protocols, where $t$ is the failure
threshold, and $f$ is the actual number of failures. Then, we present
a tight upper bound and use it for the synchronous part of the
optimistic protocol.
Finally, for the asynchronous fallback, we use a variant of the
(optimal) VABA protocol, which we reconstruct to safely combine it with
the synchronous part.

We believe that our adaptive to failures synchronous byzantine
agreement protocol has an independent interest since it is the first protocol we are aware of which communication complexity optimally
depends on the actual number of failures.

\end{abstract}

 \section{Introduction}
\label{sec:intro}

With the emergence of the Blockchain use case, designing
efficient geo-replicated Byzantine tolerant state machine replication (SMR) systems is now one of the most challenging problems in distributed computing.
The core of every Byzantine SMR system is
the Byzantine agreement problem (see~\cite{bano2019sok} for a
survey), which was first introduced four decades ago~\cite{PSL80}
and has been intensively studied since then~\cite{
CachinSecure, king2016byzantine, malkhi2019flexible}.
The bottleneck in geo-replicated SMR systems is
the network communication, and thus a substantial effort in recent
years was invested in the search for an optimal communication Byzantine
agreement protocol~\cite{gueta2019sbft, hotstuff, tendermint,
naor2019cogsworth}. 

To circumvent the FLP~\cite{FLP85} result that states that
deterministic asynchronous agreement protocols are impossible,
most SMR solutions~\cite{pbft,
gueta2019sbft, hotstuff, zyzzyva} assume eventually
synchronous communication models and provide safety during asynchronous periods
but can guarantee progress only after the global
stabilization time (GST).

Therefore, it is quite natural that state-of-the-art authenticated
Byzantine agreement protocols~\cite{gueta2019sbft, hotstuff,
tendermint, naor2019cogsworth} focus on reducing communication cost
after GST, while putting up with the potentially unbounded cost
beforehand.
For example, Zyzzyva~\cite{zyzzyva} and later
SBFT~\cite{gueta2019sbft} use threshold
signatures~\cite{shoup2000practical} and collectors to reduce the
quadratic cost induced by the all-to-all communication in each view
of the PBFT~\cite{pbft} protocol.
HotStuff~\cite{hotstuff} leverages ideas presented in
Tendermint~\cite{tendermint} to propose a linear view-change
mechanism, and a few follow-up works~\cite{naor2019cogsworth,
naor2020expected, bravo2020making} proposed algorithms for
synchronizing parties between views.
Some~\cite{naor2019cogsworth, naor2020expected} proposed a synchronizer    
with a linear cost after GST in failure-free runs, while 
others~\cite{bravo2020making} provided an implementation that
guarantees bounded memory even before GST. 
However, none of the above algorithms bounds the number of 
views executed before GST, and thus none of them can guarantee a
bounded total communication cost.

We argue in this paper that designing agreement algorithms in the eventually synchronous model is not the best approach to reduce the total communication complexity of SMR systems
and propose an alternative approach.
That is, we propose to forgo the eventually synchronous assumptions and instead optimistically consider the network to be synchronous and immediately
switch to randomized asynchronous treatment if synchrony assumption
does not hold.
Our goal in this paper is to develop an \emph{optimistic} protocol
that adapts to network conditions and actual failures to
guarantee termination with an optimal communication cost under all
failure and network scenarios.

\subsection{Contribution}

\textbf{Vulnerability of the eventually synchronous model.} 
A real network consists of synchronous and asynchronous periods.
From a practical point of view, if the synchronous periods are too
short, no deterministic Agreement algorithm can make
progress~\cite{FLP85}.
Therefore, to capture the assumption that eventually there will be a
long enough synchronous period for a deterministic Agreement to
terminate, the eventually synchronous model assumes that every
execution has a point, called GST, after which the network is
synchronous.
In our first result, we capture the inherent vulnerability of
algorithms designed in the eventually synchronous communication model.
That is, we exploit the fact that GST can occur after an arbitrarily long time to prove the following lower
bound:

\begin{theorem}
\label{theorem:ES}

There is no eventually synchronous deterministic Byzantine agreement protocol that can tolerate a single failure and guarantee bounded
communication cost even in failure-free runs.

\end{theorem}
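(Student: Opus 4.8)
The plan is to argue by contradiction: assume that for some bound $B$ there is a correct deterministic eventually synchronous Byzantine agreement protocol $\Pi$ whose communication cost in every failure-free execution is at most $B$ before GST (the claim is vacuous for a protocol tolerating no faults, where a bounded protocol trivially exists, so assume $\Pi$ tolerates $t\ge 1$ faults). The goal is to exhibit a failure-free execution that violates agreement, validity, or termination, contradicting the correctness of $\Pi$. The only levers the adversary needs are the two it is always granted in this model: before GST it may delay any message by an arbitrary finite amount, and it may place GST arbitrarily far in the future, choosing when to place it adaptively.

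The core of the argument is a \emph{quiescence} step followed by an \emph{indistinguishability} step. I would run $\Pi$ with all $n$ parties correct and GST pushed past every finite time to which the argument will refer. Since every message is then sent ``before GST'', $\Pi$ sends at most $B$ of them, so — being deterministic — after finitely many steps it has sent its last message, and from that point on each party can change state only through local timeout events. Now let the adversary repeatedly deliver all currently in-transit messages (legal, since pre-GST delays are unconstrained). One of two things must happen. Either the protocol keeps responding with fresh messages, in which case — because GST has not yet arrived — the pre-GST message count eventually exceeds $B$, a contradiction. Or the protocol reaches a configuration with no message in transit from which no party will ever send another message; plant GST there. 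If in this frozen configuration some correct party has not decided, it never will, contradicting termination. If every party has decided, then in the sibling execution in which all messages were delayed rather than delivered each party decided having received \emph{nothing}, so its decision is a function of its own input alone; validity applied to the all-$0$ and all-$1$ executions pins that function to the identity on $\{0,1\}$, whence the failure-free execution with $p_1$ holding $0$, all other parties holding $1$, and all messages delayed has $p_1$ decide $0$ and $p_2$ decide $1$, violating agreement. (A safety violation visible at a finite time is visible in a genuine execution whose GST is placed later, so placing GST ``at infinity'' is only a convenience.) One can equivalently package the quiescence step as an FLP~\cite{FLP85}-style impossibility: the pre-GST behavior of $\Pi$ would be a deterministic asynchronous consensus protocol that always terminates.

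The step I expect to be the real obstacle is making the dichotomy above airtight against a protocol engineered to never quite fall silent while still communicating only finitely often along any bounded-delay suffix — e.g., one that keeps a single perpetually delayed message in flight, or that advances through views on an unbounded schedule of timeouts without sending anything. Ruling this out requires a careful run-structure analysis that tracks both in-transit messages and pending timers: once the communication budget is spent, purely local (non-sending) timeout evolution cannot move the system toward a legitimate decision — a party that decides on the strength of such evolution alone can be driven into disagreement by the input-flipping schedule, and a system that never decides fails termination the moment GST is planted on its frozen configuration. No new idea beyond the quiescence-plus-indistinguishability template is needed, but the case analysis is where the work lies.
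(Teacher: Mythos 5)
Your argument never puts a byzantine party on the board, and that is fatal: an argument built only from failure-free executions, adversarial delays, and GST placement would apply verbatim to a protocol designed for $t=0$ (e.g., ``everyone sends its input to $p_1$, who decides and broadcasts the decision''), which clearly does have bounded cost before GST in every failure-free run. So some step must fail, and it is the branch in which every party decides in the quiescent deliver-everything execution. There you jump to ``the sibling execution in which all messages were delayed'' and assert that each party ``decided having received nothing.'' That does not follow: the execution where all messages are delivered and the one where all messages are delayed are trivially distinguishable to every party (one receives messages, the other does not), so decisions in the former say nothing about the latter; and in the all-delayed execution nothing forces anyone to decide before GST at all --- Termination only requires an eventual decision, which the protocol is free to postpone until the delayed messages arrive after GST. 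The FLP repackaging fails for the same reason: the pre-GST restriction of $\Pi$ is not a terminating asynchronous consensus protocol, because $\Pi$ may legitimately decide only after GST.

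The missing idea is the one the paper's proof is built on: suppression-vs-delay indistinguishability. Take a protocol $A$ whose failure-free pre-GST cost $N$ is minimal, look at a run attaining $N$, and let a \emph{byzantine} sender suppress the last delivered message $m_r$. Honest parties cannot tell ``$m_r$ was suppressed by a faulty sender'' from ``$m_r$ is merely delayed in a failure-free run.'' If in some such suppressed run an honest party reacts by sending a further message, then delaying $m_r$ in the corresponding failure-free run triggers that same message and pushes the cost to $N+1$, contradicting the bound; if no honest party ever reacts, then $m_r$ was never needed, yielding a correct protocol of cost $N-1$ and contradicting minimality. This byzantine lever is entirely absent from your proposal. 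The ingredients you do develop --- that a bounded pre-GST cost for every GST placement yields finitely many messages along the ``GST at infinity'' schedule, and that a quiescent, empty-channel configuration with an undecided party violates Termination once GST is planted --- are sound, but they never touch the main case, in which the protocol simply delivers its $B$ messages and everyone decides.
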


\noindent\textbf{Tight bounds for synchronous Byzantine
agreement.}  
To develop an optimal optimistic protocol that achieves optimal
communication under all failure and network scenarios we first
establish what is the best we can achieve in synchronous
settings.
Dolev and Reischuk~\cite{dolev1985bounds} proved that there is no deterministic protocol that solves synchronous Byzantine agreement with
$o(t^2)$ communication cost, where $t$ is the failure threshold.
We generalize their result by considering the actual number of failures
$f \leq t$ and prove the following lower bound:
\begin{theorem}
\label{theorem:S}

Any synchronous deterministic Byzantine agreement protocol has
$\Omega(ft + t)$ communication complexity.

\end{theorem}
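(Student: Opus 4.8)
The plan is to establish the two terms $\Omega(t)$ and $\Omega(ft)$ separately, since the maximum of the two is $\Theta(ft+t)$ (noting that when $f=0$ the bound must still be $\Omega(t)$, which the Dolev--Reischuk argument does not directly give). The $\Omega(t)$ term is the easy part: in any byzantine agreement protocol at least $t+1$ parties must send a message, since otherwise some non-sender could be a correct party whose input is never communicated, and the adversary could make the remaining parties decide without regard to it, contradicting validity (or agreement against a sibling run). So I would dispense with this in a short paragraph.

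The main work is the $\Omega(ft)$ lower bound, which I would prove by adapting the Dolev--Reischuk adversary argument to a run with exactly $f$ actual faults out of a budget of $t$. First I would set up a target execution: fix an input assignment for which all correct parties must decide some value $v$, and suppose for contradiction that fewer than $ft/c$ messages (for a suitable constant $c$) are sent among correct parties. Then, by averaging, there is a set $S$ of roughly $f$ correct parties that collectively receive fewer than $t$ messages from correct parties. I would have the $f$ byzantine parties behave toward $S$ exactly as they would in the real protocol but behave toward everyone outside $S$ as if they (the byzantine parties) had crashed, and additionally have the adversary selectively drop the few (fewer than $t$, but we only have $f$ real faults --- see the obstacle below) messages into $S$ from correct parties. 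The parties in $S$ then cannot distinguish this execution from one in which the whole world runs a different input configuration leading to decision $\bar v$, forcing a party in $S$ to decide $\bar v$ and breaking agreement.

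The step I expect to be the main obstacle is reconciling the message-dropping with the fault budget: the classical Dolev--Reischuk proof can "silence" up to $t$ incoming edges into the victim set because it charges them against the $t$ byzantine parties, but here we have committed to only $f$ actual byzantine parties and we want the stronger conclusion $\Omega(ft)$ rather than $\Omega(t^2)$. The resolution I would pursue is to make the $f$ byzantine parties jointly responsible for suppressing messages to a victim set $S$ of size about $t$ (rather than size $f$): each of the $f$ byzantine parties can "absorb" the communication of roughly $t/f \cdot f = $ — more carefully, one picks $|S|\approx t$ correct victims, argues that if the total message count is $o(ft)$ then these $t$ victims receive on average $o(f)$ messages each, and then uses the $f$ byzantine parties to pairwise-silence the crucial links by having byzantine party $b_j$ equivocate so as to neutralize the messages from its "assigned" block of correct senders into $S$. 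Making this assignment and the resulting indistinguishability argument precise --- in particular verifying that each victim in $S$ really cannot tell the doctored execution from a legitimate one with the opposite decision, and that the byzantine parties' behavior is internally consistent across the two executions --- is the crux, and I would devote the bulk of the proof to it, closing with the standard contradiction to agreement.
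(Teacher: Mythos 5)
There is a genuine gap, and it sits exactly at the point you flag as the crux. A byzantine party cannot ``neutralize'' or ``absorb'' a message sent by one correct party to another: in a synchronous run every message between honest parties is delivered within $\Delta$, so the only way to suppress the few messages destined for a victim is to corrupt their senders (or the victim itself). Your plan commits to a run with exactly $f$ byzantine parties and then asks those same $f$ parties to silence links between \emph{correct} parties by ``equivocating'' over assigned blocks of senders; that is not an action available to the adversary in this model, so the indistinguishability argument you would build on it cannot go through.

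The paper resolves the obstacle by a different accounting of the fault budget across \emph{two} runs rather than one. In the low-cost run $r$, the $f$ corrupted parties are chosen to be the prospective victims themselves: they behave exactly like honest parties that receive no messages, so $r$ genuinely has only $f$ faults and the assumed $o(ft)$ bound applies to it; by pigeonhole some victim $p$ receives fewer than $t/2$ messages in $r$. The contradiction is then derived in a second run $r'$, indistinguishable to the remaining honest parties, in which $p$ is honest and the corruption budget is spent differently: the other $f-1$ pretend-deaf parties plus the at most $t/2$ senders of messages to $p$ are corrupted, for a total of at most $f-1+t/2\le t$ faults. This is why the argument is carried out for $f\le\lfloor t/2\rfloor$, which suffices for the asymptotic claim, and it is also why silencing a \emph{single} party (not a set of size $\approx t$) is all that is needed; the dependence of which party can be silenced on the input is handled by repeating the construction to obtain $2t+1$ candidate victims per input and intersecting the resulting sets. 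Your $\Omega(t)$ paragraph is essentially the paper's separate claim for the $f=0$ case and is fine, but the $\Omega(ft)$ half needs the two-run budget shift above rather than the link-suppression mechanism you propose.
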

\noindent It is important to note that the lower bound holds even for
deterministic protocols that are allowed to use perfect cryptographic
schemes such as threshold signatures and authenticated links.
Then, we present the first deterministic
cryptography-based synchronous Byzantine agreement protocol that matches
our lower bound for the authenticated case.
That is, we prove the following:

\begin{theorem}
\label{theorem:sync}

There is a deterministic synchronous authenticated Byzantine agreement
protocol with $O(ft+t)$ communication complexity.

\end{theorem}

\noindent We believe these results are interesting on their own
since they are the first to consider the actual number of failures, which
was previously considered in the problem of early
decision/stopping~\cite{dolev1990early, keidar2003simple}, for
communication complexity analysis of the Byzantine agreement problem.\\

\noindent\textbf{Optimal optimistic Byzantine agreement.}
Our final contribution is an optimistic Byzantine
agreement protocol that tolerates up to $t < n/3$ failures
and has asymptotically optimal communication cost under all network
conditions and failure scenarios.
That is, we prove the following:
\begin{theorem}
\label{theorem:opt}

There is an authenticated Byzantine agreement protocol with $O(ft+t)$
communication complexity in synchronous runs and expected $O(t^2)$ communication
complexity in all other runs.

\end{theorem}
To achieve the result, we combine our optimal adaptive synchronous
protocol with an asynchronous fallback, for which we use a variant of
VABA~\cite{VABA}. As we shortly explain, the combination is not trivial
since we need to preserve safety even if parties decide in different parts of the
protocol, and implement an efficient mechanism to prevent honest parties
from moving to the fallback in synchronous runs.

\subsection{Technical overview}

The combination of our synchronous part with the asynchronous fallback
introduces two main challenges. The first challenge is to design a mechanism that (1) makes sure parties do not move to the fallback
unless necessary for termination, and (2) has 
$O(ft+t)$ communication complexity in synchronous runs.
The difficulty here is twofold: first, parties cannot always
distinguish between synchronous and asynchronous runs.
Second, they cannot distinguish between honest parties that complain
that they did not decide (due to asynchrony) in the first part and
Byzantine parties that complain because they wish to increase the
communication cost by moving to the asynchronous fallback.
To deal with this challenge, we implement a \emph{Help\&tryHalting}
procedure.
In a nutshell, parties try to avoid the fallback part by
helping complaining parties learn the decision value and move to the fallback
only when the number of complaints indicates that the run is not
synchronous.
This way, each Byzantine party in a synchronous run cannot increase
the communication cost by more than $O(n) = O(t)$, where $n$ is the total number of parties.

The second challenge in the optimistic protocol is to combine
both parts in a way that guarantees safety.
That is, since some parties may decide in the synchronous part and
others in the asynchronous fallback, we need to make sure they
decide on the same value.
To this end, we use the \emph{leader-based view (LBV)} abstraction,
defined in~\cite{ace}, as a building block for both parts.
The LBV abstraction captures a single view in a view-by-view agreement protocol such that one of its important properties is that a
sequential composition of them preserves safety.
For optimal communication cost, we adopt techniques
from~\cite{hotstuff} and~\cite{VABA} to implement the LBV abstraction
with an asymptotically linear cost $(O(n))$.

Our synchronous protocol operates up to $n$ sequentially composed
pre-defined linear LBV instances, each with a different leader. 
To achieve an optimal (adaptive to the number
of actual failures) cost, leaders invoke their LBVs only if they have
not yet decided.
In contrast to eventually synchronous protocols, the synchronous
part is designed to provide termination only in synchronous runs.
Therefore, parties do not need to be synchronized before views, but
rather move from one LBV to the next at pre-defined times.
As for the asynchronous fallback, we use the linear LBV building
block to reconstruct the VABA~\cite{VABA} protocol in a way that
forms a sequential composition of LBVs, which in turn allows a
convenient sequential composition with the synchronous part.

\subsection{Related work}

The idea of combining several agreement protocols is not new.
The notion of speculative
linearizability~\cite{guerraoui2012speculative} allows parties to
independently switch from one protocol to another, without requiring
them to reach agreement to determine the change of a protocol.
Aguilera and Toueg~\cite{aguilera1996randomization} presented an hybrid
approach to solve asynchronous crash-fault consensus by combining
randomization and unreliable failure detection.
Guerraoui et al~\cite{guerraoui2010next} defined an abstraction that
captures byzantine agreement protocols and presented a framework to
compose several such instances.
 
Some previous work on Byzantine agreement consider a fallback in the
context of the number rounds required for
termination~\cite{brasileiro2001consensus, martin2006fast,
song2008bosco}.
That is, in well-behaved runs parties decide in a single communication
round, wheres in all other runs they fallback to a mode that requires
more rounds to reach an agreement.
We, in contrast, are interested in communication complexity. 
To the best of our knowledge, our protocol is the first protocol that
adapts its communication complexity based on the actual number of
failures.

The combination of synchronous and asynchronous runs in the context of
Byzantine agreement was previously studied by Blum et
al.~\cite{blum2019synchronous}.
Their result is complementary to ours since they deal with optimal
resilience rather than optimal communication.
They showed lower and upper bounds on the number of failures that both
(synchronous and asynchronous) parts can tolerate.
For the lower bound, they showed that $t_a + 2t_s <n$, where $t_a$ and
$t_s$ is the threshold failure in asynchronous and synchronous runs,
respectively.
In our protocol $t_a = t_s < n/3$, which means that the protocol is 
optimal in the sense that neither $t_a$ or $t_s$ can be increased
without decreasing the other.
For the upper bound, they present a matching algorithm for any $t_a$ and
$t_s$ that satisfy the weak validity condition. 
Our protocol, in contrast, satisfy the more practical external validity
condition (see more details in the next section) with an optimal
communication cost.

As for asynchronous Byzantine agreement, the lower bound in~\cite{VABA}
shows that there is no protocol with optimal resilience and $o(n^2)$
communication complexity.
Two recent works by Cohen et al.~\cite{cohen2020not} and Blum et
al~\cite{blum2020asynchronous}. circumvent this lower bound by trading
optimal resilience.
That is, their protocols tolerate $f < (1 - \epsilon)n/3$ Byzantine
faults.
We consider in this paper optimal resilience and thus our protocol
achieves optimal communication complexity in asynchronous runs.

The use of cryptographic tools (e.g. PKI and threshold signatures
schemes) is very common in distributed computing to reduce round and
communication complexity.
To be able to focus on the distributed aspect of the problem, many
previous algorithms assume ideal cryptographic tools to avoid the
analysis of the small error probability induced by the security
parameter.
This includes the pioneer protocols for Byzantine
broadcast~\cite{dolev1983authenticated, dolev1985bounds} and binary
asynchronous Byzantine agreement~\cite{bracha1985asynchronous},
recent works on synchronous Byzantine
agreement~\cite{momose2020optimal, nayak2020improved}, and most of the
exciting practical algorithms~\cite{zyzzyva, pbft} including the
state-of-the-art communication efficient
ones~\cite{pbft,hotstuff,gueta2019sbft, tendermint}).
We follow this approach and assume ideal threshold signatures schemes
for better readability.

 \section{Model}
\label{sec:model}

Following practical solutions~\cite{pbft,
gueta2019sbft, hotstuff, zyzzyva, HoneyBadger}, we
consider a Byzantine message passing peer to peer model with a set
$\Pi$ of $n$ parties and a computationally bounded adversary that
corrupts up to $t < n/3$ of them, $O(t) = O(n)$.
Parties corrupted by the adversary are called \emph{Byzantine} and
may arbitrarily deviate from the protocol. 
Other parties are \emph{honest}.
To strengthen the result we consider an adaptive adversary for the
upper bound and static adversary for the lower bound.
The difference is that a \emph{static} adversary must decide what
parties to corrupt at the beginning of every execution, whereas an
\emph{adaptive} adversary can choose during the executions.

\textbf{Communication and runs.}
The communication links are reliable but controlled by the
adversary, i.e., all messages sent among honest parties are
eventually delivered, but the adversary controls the delivery time.
We assume a known to all parameter $\Delta$ and say that a run of a
protocol is \emph{eventually synchronous} if there is a \emph{global
stabilization time (GST)} after which all message sent among honest
parties are delivered within $\Delta$ time.
A run is \emph{synchronous} if GST occurs at time 0, and
\emph{asynchronous} if GST never occurs.\\

\textbf{The Agreement problem.}
Each party get an input value from the adversary from some domain
$\mathbb{V}$ and the Agreement problem exposes an API to \emph{propose} a
value and to output a \emph{decision}.
We are interested in protocols that never compromise safety and thus
require the following property to be satisfied in all runs:

\begin{itemize}
  
 \item Agreement: All honest parties that decide, decide on the same
 value.
  
\end{itemize}

Due to the FLP result~\cite{FLP85}, no deterministic agreement protocol can provide safety and liveness properties in all
asynchronous runs.
Therefore, in this paper, we consider protocols that guarantee
(deterministic) termination in all synchronous and eventually
synchronous runs, and provides a probabilistic termination in
asynchronous ones:

\begin{itemize}
  
 \item Termination: All honest parties eventually decide.
  
 \item Probabilistic-Termination: All honest parties decide with
 probability 1.
  
\end{itemize}

As for validity, honest parties must decide only on values from
some domain $\mathbb{V}$.
For the lower bounds, to strengthen them as
much as possible, we consider the binary case, which is the weakest
possible definition:

\begin{itemize}
  
 \item Binary validity: The domain of valid values $\mathbb{V} =
 \{0,1\}$, and if all honest parties propose the same value $v \in
 \mathbb{V}$, than no honest party decides on a value other than $v$.

\end{itemize}

For the upper bounds, we are interested in practical multi-valued
protocols. 
In contrast to binary validity, in a multi-valued Byzantine agreement
we need also to define what is a valid decision in the case that not all parties a priori agree (i.e., propose different values).
One option is Weak Validity~\cite{PSL80,
blum2019synchronous}, which allows parties to agree on a pre-defined
$\bot$ in that case.
This definition is well defined and makes sense for some use cases.
When Pease et al.~\cite{PSL80} originally defined it, they had in mind
a spaceship cockpit with 4 sensors that try to agree even if one is
broken (measures a wrong value).
However, as Cachin et al, explain in their paper~\cite{CachinSecure}
and book~\cite{malkhi2019concurrency}, this definition is useless for
SMR (and Blockchains) since if parties do not a priori agree, then they
can keep agreeing on $\bot$ forever leaving the SMR with no "real"
progress.

To solve the limitation of being able to agree on $\bot$, we
consider the external validity property that was first defined by
Cachin et al.~\cite{CachinSecure}, which is implicitly
or explicitly considered in most practical Byzantine agreement
solutions we are aware of~\cite{VABA, pbft, hotstuff, gueta2019sbft,
zyzzyva}.
Intuitively, with external validity, parties are allowed to decide on
a value proposed by any party (honest and Byzantine) as long as it is
valid by some external predicate (e.g., all transaction are valid in
the block).
To capture the above, we give a formal definition below.

\begin{itemize}
  
 \item External validity: The domain of valid values $\mathbb{V}$
 is unknown to honest parties. At the beginning of every run,
 each honest party gets a value $v$ with a proof $\sigma$
 that $v \in \mathbb{V}$ such that all other honest parties can
 verify.
  
\end{itemize}

Note that our definition rules out trivial solutions such as
simply deciding on some pre-defined externally valid value because
the parties do not know what is externally valid unless they see a
proof.

We define an \emph{optimistic Agreement protocol} to be a protocol
that guarantees Agreement and External validity in all
runs, Termination in all synchronous and eventually synchronous runs, and Probabilistic-Termination in asynchronous runs.\\
 
 \textbf{Cryptographic assumptions.}
We assume a computationally bounded adversary and a trusted dealer
that equips parties with cryptographic schemes.
Following a common standard in distributed computing and for simplicity
of presentation (avoid the analysis of security
parameters and negligible error probabilities), we assume that the
following cryptographic tools are perfect:

\begin{itemize}

 \item \textbf{Authenticated link.} If an honest party $p_i$
 delivers a messages $m$ from an honest party $p_j$, then $p_j$ 
 previously sent $m$ to $p_i$.
 \item \textbf{Threshold signatures scheme.} We assume that each
 party $p_i$ has a private function $\emph{share-sign}_i$, and we
 assume 3 public functions: \emph{share-validate},
 \emph{threshold-sign}, and \emph{threshold-validate}. 
 Informally, given ``enough'' valid shares, the function
 \emph{threshold-sign} returns a valid threshold signature.
 For our algorithm, we sometimes require ``enough'' to be $t+1$ and
 sometimes $n-t$.
 A formal definition is given in Appendix~\ref{app:TS}.
  
\end{itemize}

\noindent We note that perfect cryptographic schemes do not exist in
practice.
However, since in real-world systems they often treated as such, we
believe that they capture just enough in order to be able to focus on
the distributed aspect of the problem.
Moreover, all the lower bounds in this paper hold even if protocols can
use perfect cryptographic schemes.
Thus, the upper bounds are tight in this aspect.\\

\textbf{Communication complexity.}
We denote by $f$ the actual number of corrupted parties in a given
run and we are interested in optimistic protocols that
utilize $f$ and the network condition to reduce communication cost.
Similarly to~\cite{VABA}, we say that a \emph{word} contains a
constant number of signatures and values, and each message contains
at least $1$ word.
The \emph{communication cost of a run $r$} is the number of words sent
in messages by honest parties in $r$.
For every $0 \leq f \leq t$, let $R^s_f$ and $R^{es}_f$ be the sets of
all synchronous and eventually synchronous runs with $f$ corrupted
parties, respectively.
The \emph{synchronous and eventually synchronous communication cost
with $f$ failures} is the maximal communication cost of runs in
$R^s_f$ and $R^{es}_f$, respectively.
We say that the \emph{synchronous communication cost of a
protocol A} is $G(f,t)$ if for every $0 \leq f \leq t$, its
synchronous communication cost with $f$ failures is $G(f,t)$.
The \emph{asynchronous communication cost of a protocol A} is the
expected communication cost of an asynchronous run of $A$. 

 \section{Lower Bounds}
\label{sec:lower}

In this section, we present two lower bounds on the communication complexity of deterministic Byzantine agreement protocols in
synchronous and eventually synchronous runs.

\subsection{Eventually synchronous runs}
The following theorem exemplifies the inherent vulnerability of the
eventually synchronous approach.

\begin{reptheorem}{theorem:ES}[restated]
There is no eventually synchronous deterministic Byzantine agreement
protocol that can tolerate a single failure and guarantee bounded
communication cost even in failure-free runs.
\end{reptheorem}

\begin{proof}

Assume by a way of contradiction that there are such algorithms.
Let $A$ be such an algorithm with the lowest eventually synchronous
communication cost with $0$ failures, and denote its communication
cost by $N$.
Clearly, $N \geq 1$.
Let $R_N \subset R^{es}_0$ be the set of all
failure-free eventually synchronous runs of $A$ that have
communication cost of $N$.
For every run $r \in R_N$ let $m_r$ be the last message that is
delivered in $r$, let $t_r$ be the time at which it is delivered, and let $p_r$
be the party that sends $m_r$.
Now for every $r \in R_N$ consider a run $r'$ that is identical to
$r$ up to time $t_r$ except $p_r$ is Byzantine that acts exactly as
in $r$ but does not send $m_r$.
Denote by $R_{N-1}$ the set of all such runs and consider two cases:

\begin{itemize}
  
 \item There is a run $r' \in R_{N-1}$ in which some message $m$ by an
 honest party $p$ is sent at some time $t_{r'} > t_r$.
 Now consider a failure-free run $r''$ that is identical to run $r$
 except the delivery of $m_r$ is delayed to $t_{r'}+1$.
 The runs $r''$ and $r'$ are indistinguishable to all parties that
 are honest in $r'$ and thus $p$ sends $m$ at time 
 time $t_{r'} > t_r$ in $r''$ as well.
 Therefore, the communication cost of $r''$ is at least $N+1$. 
 A contradiction to the communication cost of $A$.
  
 \item Otherwise, we can construct an algorithm $A'$ with a better
 eventually synchronous communication cost with $0$ failures than
 $A$ in the following way: $A'$ operates identically to $A$ in all
 runs not in $R_N$ and for every run $r \in R_N$ $A'$ operates as
 $A$ except $p_r$ does not send $m_r$.
 A contradiction to the definition of $A$.  
  
\end{itemize}

\end{proof}

\subsection{Synchronous runs} We next prove a lower bound that applies
even to synchronous Byzantine agreement algorithms and is adaptive to
the number of actual failures $f$.
The proof is a generalization of the proof
in~\cite{dolev1985bounds}, which has been proved for the  
Byzantine broadcast problem and considered the worst-case scenario
($f=t$).
It is important to note that the proof captures deterministic
authenticated algorithms even if they are equipped with perfect
cryptographic tools.

We start with a simple claim:

\begin{claim}
\label{claim:atLeastT}

The synchronous communication cost with 0 failures of any Byzantine
agreement algorithm is at least~$t$.

\end{claim}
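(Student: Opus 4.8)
The plan is to argue by contradiction: suppose that for some Byzantine agreement algorithm, \emph{every} failure-free synchronous run has communication cost at most $t-1$. Since every message carries at least one word and has a single recipient, in any such run at most $t-1$ distinct parties ever receive a message; call the remaining parties \emph{isolated}. An isolated party receives no message whatsoever, so by determinism its entire execution --- and in particular its output --- is a function of its own input only.

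Next I would build a hybrid over inputs. For $k=0,1,\dots,n$ let $r^{(k)}$ be a failure-free synchronous run in which parties $p_1,\dots,p_k$ propose $0$ and parties $p_{k+1},\dots,p_n$ propose $1$ (pick any such run; by our assumption each has cost at most $t-1$). Since $r^{(k)}$ is failure-free and synchronous, Termination applies, and by Agreement all parties output a common value $w_k\in\{0,1\}$. Binary validity gives $w_0=1$ (everyone proposes $1$) and $w_n=0$ (everyone proposes $0$), so there is an index $k^*$ with $w_{k^*}=1$ and $w_{k^*+1}=0$; the runs $r^{(k^*)}$ and $r^{(k^*+1)}$ differ only in the input of $p_{k^*+1}$.

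The final step is a counting argument. Let $A$ and $A'$ be the sets of parties that receive at least one message in $r^{(k^*)}$ and $r^{(k^*+1)}$, respectively, so $|A|,|A'|\le t-1$. Then $|A\cup A'\cup\{p_{k^*+1}\}|\le 2t-1<n$, using $t<n/3$, so some party $p$ lies outside this set. This $p$ is isolated in both runs, and since $p\neq p_{k^*+1}$ it has the same input in both; by determinism it therefore runs an identical execution in $r^{(k^*)}$ and in $r^{(k^*+1)}$ and outputs the same value, contradicting $w_{k^*}=1\neq 0=w_{k^*+1}$.

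I expect the only mildly delicate point to be the very last one --- that an isolated party with a fixed input behaves identically across the two runs --- but this follows immediately from determinism once we note that it never receives anything. The substantive content is the input hybrid together with the pigeonhole bound $|A|+|A'|+1<n$, which is precisely where the assumption $t<n/3$ enters. Note also that, since $f=0$, no party may be made Byzantine, so the hybrid over inputs (rather than over faulty behaviors) is the natural device here.
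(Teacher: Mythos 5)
Your proof is correct. It shares the paper's core ingredients --- the observation that a budget below $t$ words leaves all but at most $t-1$ parties isolated, plus an indistinguishability argument for an isolated party --- but it takes a genuinely different (and more careful) route. The paper compares only two runs, the all-propose-$1$ run and the all-propose-$0$ run, and asserts that a party isolated in both ``cannot distinguish'' them; that step is delicate, since such a party's \emph{own input} differs across the two runs, so its local view is not identical and Binary validity alone would force it to decide differently without contradiction. Your input hybrid $r^{(0)},\dots,r^{(n)}$ repairs exactly this: at the flip index $k^*$ the adjacent runs differ only in the input of $p_{k^*+1}$, and the pigeonhole bound $|A\cup A'\cup\{p_{k^*+1}\}|\le 2t-1<n$ yields a party that is isolated in both runs \emph{and} has the same input, so determinism genuinely gives identical executions and the contradiction with $w_{k^*}\neq w_{k^*+1}$ is airtight. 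The price is a longer argument; the gain is that every indistinguishability claim actually holds. One small remark: the pigeonhole step only needs $2t-1<n$, i.e.\ $n\ge 2t$, which is much weaker than $t<n/3$; and you should note (as you implicitly do) that an isolated party must nevertheless decide, by Termination, for the final contradiction to bite.
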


\begin{proof}

Assume by a way of contradiction such algorithm $A$ that sends less
than $t$ messages in runs with $0$ failures and consider a run $r \in
R^s_0$ of $A$ in which all parties propose 1.
By the Termination and Binary validity properties, all
parties decide 1 in $r$.
By the contradicting assumption and since all parties in $r$ are
honest, there are at least $2t+1$ honest parties that get no messages
in $r$.
Now consider anther run $r' \in R^s_0$ of $A$ in which all parties
propose 0, and again, there are at least $2t+1$ honest parties that get
no messages in $r'$.
Thus, there is at least one honest party $p$ that gets no messages in
both runs and thus cannot distinguish between $r$ and $r'$.
Therefore, $p$ decides 1 in run $r'$ as well.
A contradiction to the Binary validity property. 

\end{proof}

The following Lemma shows that if honest parties send 
$o(ft)$ messages, then Byzantine parties can prevent honest parties
from getting any of them.

\begin{lemma}
\label{lem:synchNoMessages}

Assume that there is a Byzantine agreement algorithm $A$,
which synchronous communication cost with $f$ failures is
$o(ft)$ for some $1 \leq f \leq \floor*{t/2}$.
Then, for every set $S \subset \Pi$ of $f$ parties and every set
of values proposed by honest parties, there is a synchronous run $r'$
s.t.\ some honest party $p \in S$ does not get any messages in $r'$.

\end{lemma}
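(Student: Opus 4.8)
The plan is to generalise the message–counting argument of Dolev and Reischuk~\cite{dolev1985bounds}: I will turn a hypothetical ``too cheap'' execution into one in which a carefully chosen coalition of at most $f$ Byzantine parties starves one honest party of $S$ of all incoming messages.

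First I would reduce the hypothesis to a statement about one concrete execution. Let $\rho$ be the failure-free synchronous execution in which the parties that are to remain honest propose their prescribed values (and the others propose, say, $0$). For any set $B$ of $f$ parties, the execution in which the parties of $B$ are Byzantine but run the code of $\rho$ verbatim is indistinguishable from $\rho$ to every party outside $B$; it therefore lies in $R^{s}_{f}$ and has the same communication cost as $\rho$, so by the hypothesis $\rho$ carries at most $c\cdot ft$ messages, where $c$ is the constant hidden in the $O(\cdot)$.

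Next comes a pigeonhole step over $S$: in $\rho$ the $f$ parties of $S$ jointly receive at most $c\cdot ft$ messages, so some $p\in S$ receives at most $c\cdot t$ of them. I would then construct $r'$ from $\rho$ by corrupting the parties that send a message to $p$ in $\rho$ and making each of them go silent toward $p$ while replaying its $\rho$-transcript to everyone else; the slack in the failure bound ($f\le\lfloor t/2\rfloor$), together with the freedom to pick the lower-bound constant, is what is used to keep this coalition within the budget $f$. Since no Byzantine party and (after the modification) no honest party ever addresses $p$, party $p$ receives nothing --- provided the modification does not cause some \emph{other} honest party to deviate from $\rho$ and contact $p$.

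Guaranteeing that proviso is the crux, and the step I expect to be the main obstacle. Silencing $p$ changes $p$'s own behaviour, which could ripple to the honest parties that hear from $p$ in $\rho$ and make one of them send a fresh message to $p$. I would close this feedback loop by an induction on the synchronous rounds, showing that $r'$ is indistinguishable from $\rho$ to every honest party but $p$: the corrupted parties replay $\rho$ toward everyone except $p$, and the (few) parties that receive a message from $p$ in $\rho$ are themselves folded into the corrupted coalition, so that $p$'s perturbed behaviour is never observed by an honest party; if a single pass does not suffice, the construction is iterated, each pass eliminating the messages that $p$'s perturbation would trigger, while a counting/monotonicity argument keeps the corrupted set below $f$ and forces the process to terminate. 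Once indistinguishability holds, the honest parties $\neq p$ behave exactly as in $\rho$, none of them contacts $p$, and neither do the Byzantine parties, so $p$ gets no messages in $r'$, as claimed.
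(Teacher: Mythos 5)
There is a genuine gap, and it sits exactly where you predicted: the ``feedback loop.'' Your base execution is the failure-free run $\rho$, in which $p$ is honest and \emph{does} receive messages. Once you silence $p$'s senders, $p$'s own behaviour changes (it now runs as an honest party with no incoming messages), and every party that hears from $p$ can observe this. Your proposed repair --- fold the recipients of $p$'s messages into the Byzantine coalition and iterate --- does not go through: the communication-cost hypothesis bounds the number of messages $p$ \emph{receives} by roughly $t/2$, but it gives no useful bound on the number of distinct parties $p$ \emph{sends} to (in round one alone $p$ could address all $n-1$ parties while the total cost stays within $O(ft)$). So the very first folding step can already push the coalition past $t$, and each further iteration only enlarges it; no counting or monotonicity argument is supplied, and none exists in general.

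The paper closes the loop by a different choice of base run, which is the idea your proposal is missing. Instead of $\rho$, take the run $r$ in which \emph{all} of $S$ is Byzantine and each member of $S$ ignores every message it receives and acts exactly like an honest party that gets no messages. The pigeonhole step then yields some $p\in S$ receiving fewer than $t/2$ messages in $r$, and in the constructed $r'$ (where $p$ is honest, $S\setminus\{p\}$ replays $r$, and the at most $t/2$ senders $P$ are Byzantine and merely suppress their messages to $p$) the honest $p$, receiving nothing, behaves \emph{identically} to its behaviour in $r$ by construction. Hence no honest party can distinguish $r'$ from $r$, nobody ever deviates, and no iteration is needed; the coalition size is $|S|-1+|P|\leq f-1+t/2\leq t$ in one shot. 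Without this device (or an equivalent one), your argument does not establish the lemma.
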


\begin{proof}

Let $r \in R^s_f$ be a run in which all parties in $S$ are
Byzantine that (1) do not send messages among themselves, and (2)
ignore all messages they receive and act like honest parties that get
no messages.
By the assumption, there is a party $p \in S$ that receives
less than $t/2$ messages from honest parties in $r$. 
Denote the set of (honest) parties outside $S$ that send messages to
$p$ in $r$ by $P\subset \Pi\setminus S$ and consider the following run
$r'$:

\begin{itemize}
  
  \item Parties in $S \setminus \{p\}$ are Byzantine that act like
  in $r$.
    
    \item Parties in $P$ are Byzantine. They do not send messages to
    $p$, but other than that act as honest parties.
    
    \item All other parties, including $p$, are honest.
  
\end{itemize}

\noindent First, note that the number of Byzantine parties in $r'$ is
$|S|-1 + |P| \leq f-1 + t/2 \leq t$.
Also, since $p$ acts in $r$ as an honest party that does not receive
messages, and all Byzantine parties in $r'$ act towards honest
parties in $r'$ ($\Pi \setminus (S \cup P)$) in exactly the same way
as they do in $r$, then honest parties in $r'$ cannot distinguish
between $r$ and $r'$. 
Thus, since they do not send messages to $p$ in $r$ they do not send
in $r'$ as well.
Therefore, $p$ does not get any message in $r'$.

\end{proof}

The next Lemma is proven by showing that honest parties that do not get
messages cannot safely decide.
Not that the case of $f > t/2$ is not required to conclude
Theorem~\ref{theorem:S} since in this case $o(ft) = o(t^2)$.

\begin{lemma}
\label{lem:sunchLemma}

For any $1 \leq f \leq \floor*{t/2}$, there is no optimistic
Byzantine agreement algorithm which synchronous communication cost
with $f$ failures is $o(ft)$.

\end{lemma}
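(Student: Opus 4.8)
The plan is to combine Lemma~\ref{lem:synchNoMessages} with a partitioning/indistinguishability argument in the style of the classical $t < n/3$ impossibility. First I would assume, for contradiction, that there is an optimistic byzantine agreement algorithm $A$ whose synchronous communication cost with $f$ failures is $o(ft)$ for some $1 \le f \le \lfloor t/2 \rfloor$. By Lemma~\ref{lem:synchNoMessages}, for \emph{every} set $S$ of $f$ parties and every assignment of proposals to honest parties, there is a synchronous run in which some honest party $p \in S$ receives no messages at all — so $p$ must decide based solely on its own input, by Termination (which holds in synchronous runs for an optimistic protocol). Since $p$ gets nothing, its decision is a fixed function of its own proposal only; by binary validity, if everyone (including $p$) proposes $v$, then $p$ decides $v$, so in fact $p$'s decision must equal its own proposal $v_p$ whenever it is isolated.

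Next I would exploit this to manufacture a disagreement. Pick two disjoint sets $S_0, S_1$, each of size $f \le \lfloor t/2 \rfloor$ (possible since $2f \le t < n$). Consider a run $r_0$ where all honest parties outside $S_0$ propose $0$, the parties of $S_0$ propose $1$, and — invoking Lemma~\ref{lem:synchNoMessages} with set $S_0$ — some honest $p_0 \in S_0$ is isolated and hence decides $1$. Meanwhile the honest majority outside $S_0 \cup S_1$ sees (almost) unanimous $0$ and, by agreement applied within a companion run, would have to decide $0$. Symmetrically build $r_1$ with roles of $0$ and $1$ swapped using $S_1$. The key is to stitch a single run (with $\le t$ byzantine faults total) that is indistinguishable from $r_0$ to the honest parties near $p_0$ and from $r_1$ to the honest parties near $p_1$: the byzantine parties in $S_0 \cup S_1$ (at most $2f \le t$ of them) play the "isolated honest party" role toward one side and withhold/forge messages toward the other, while the genuinely honest bulk in the middle decides a single value, contradicting that $p_0$ decides $1$ and $p_1$ decides $0$ in their respective views.

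The main obstacle is making this splicing precise while respecting the fault budget: Lemma~\ref{lem:synchNoMessages} already spends up to $f-1$ faults in $S \setminus \{p\}$ plus up to $t/2$ faults in the "blocking" set $P$, so I cannot naively also corrupt a second set $S_1$ of size $f$. The careful accounting is that in the final run I only need \emph{one} isolated honest party to be fooled at a time within a chain of indistinguishable runs; the disagreement then propagates via a sequence of runs each differing from the next only in one party's behavior, so that transitivity of "same decision" (forced by agreement across adjacent pairs) collides with the validity-forced decisions at the two ends. I would set this up as: (i) a run forcing decision $0$ among a large honest set, (ii) a short hybrid sequence flipping proposals and byzantine behavior one step at a time, each step preserving some honest party's view, ending at (iii) a run forcing decision $1$; then agreement chained across the sequence yields $0 = 1$. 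Getting the hybrid sequence to keep every intermediate run within the $t$-fault bound and synchronous is the delicate part, and it is exactly where Lemma~\ref{lem:synchNoMessages}'s guarantee — "some honest party in $S$ gets nothing, for \emph{any} proposals" — does the heavy lifting, because it lets me freely re-choose the isolated party's input at each hybrid step without extra corruptions.
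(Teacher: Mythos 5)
Your plan diverges substantially from the paper's proof and leaves its hardest step unresolved. The paper does \emph{not} use a partitioning/splicing argument between two honest camps at all. Instead it applies Lemma~\ref{lem:synchNoMessages} repeatedly, each time swapping the previously found isolated party out of the set $S$ for a fresh one, to accumulate a set $T_0$ of $2t+1$ parties each of which can be made an isolated honest party in some run where all honest parties propose $0$, and likewise a set $T_1$ for the all-propose-$1$ input (the bound $f \leq \lfloor t/2 \rfloor$ is what lets this iteration run $2t+1$ times). It then picks a single party $p \in T_0 \cap T_1$: by Termination and Binary validity $p$ must decide $0$ in one run and $1$ in the other while receiving no messages in either, and the contradiction is extracted from that one party's view. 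No Agreement-across-parties argument, no hybrid chain, no second corrupted set is needed.

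Your route has two concrete gaps. First, the fault accounting: isolating $p_0 \in S_0$ already costs up to $f-1$ corruptions in $S_0\setminus\{p_0\}$ plus up to $t/2$ in the blocking set $P_0$; doing the same simultaneously for $p_1 \in S_1$ costs another $f-1+t/2$, which exceeds $t$. You acknowledge this and propose a hybrid chain of runs to avoid simultaneous corruption, but you never specify the intermediate runs, which honest party each adjacent pair shares, or why Agreement propagates a common decision along the chain --- and this is exactly the step that makes or breaks the argument. Second, your claim that an isolated party's decision ``must equal its own proposal'' presupposes that the \emph{same} party can be isolated under both unanimous inputs; Lemma~\ref{lem:synchNoMessages} only guarantees that \emph{some} party of $S$ is isolated for each fixed input assignment, and the isolated party may differ per assignment. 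Supplying that missing identification is precisely the $2t+1$-fold iteration and intersection argument in the paper, which your proposal skips. As written, the proposal is a sketch whose two load-bearing steps are both unproven.
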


\begin{proof}

Assume by a way of contradiction such protocol $A$ which synchronous
communication cost with $f$ failures is $o(ft)$ for some $1
\leq f \leq \floor*{t/2}$. 
Pick a set of $S_1 \subset \Pi$ of $f$ parties and let $V$ be the set
of values that honest parties propose.
By Lemma~\ref{lem:synchNoMessages}, there is a run $r_1$ of $A$ in which
honest parties propose values from $V$ s.t.\ some honest party $p_1
\in S$ does not get any messages.
Now let $S_2 = \{p\} \cup S_1\setminus\{p_1\}$ s.t.\ $p
\in \Pi \setminus S_1$.
By Lemma~\ref{lem:synchNoMessages} again, there is a run $r_2$ of $A$
in which honest parties propose values from $V$ s.t.\ some honest party
$p_2 \neq p_1$ does not get any messages.
Since $f \leq \floor*{t/2}$, we can repeat the above $2t + 1$ times by
each time replacing the honest party in $S_i$ that get no messages with
a party not in $S_i \cup \{p_1,p_2,\ldots,p_i\}$. 
Thus, we get that for every possible set of inputs $V$ (values proposed
by honest parties) there is a set $T$ of $2t+1$ parties s.t.\ for
every party $p \in T$ there is a run of $A$ in which honest parties
propose values from $V$, $p$ is honest, and $p$ does not get any
messages.
In particular, there exist such set $T_0$ for the case in which all
honest party input $0$ and a set $T_1$ for the case in which all
honest parties input $1$.
Since $|T_0| = |T_1| = 2t+1$, there is a party $p \in T_1 \cap T_2$.
Therefore, by the Termination and Binary validity properties, there
is a run $r$ in which $p$ does not get any messages and decides $0$
and a run $r'$ in which $p$ does not any messages and decides $1$.
However, since $r$ and $r'$ are indistinguishable to $p$ we get a
contradiction.

\end{proof}

\noindent The following Theorem follows directly from
Lemma~\ref{lem:sunchLemma} and Claim~\ref{claim:atLeastT}.

\begin{reptheorem}{theorem:S}[restated]
Any synchronous deterministic Byzantine agreement protocol has a
communication cost of $\Omega(ft + t)$.
\end{reptheorem}

 \section{Asymptotically optimal optimistic Byzantine Agreement}
\label{sec:algorithms}

Our optimistic Byzantine agreement protocol safely combines synchronous and asynchronous protocols.
Our synchronous protocol, which is interesting on its own, matches the
lower bound proven in Theorem~\ref{theorem:S}.
That is, its communication complexity is $O(ft+t)$.
The asynchronous protocol we use has a worst-case optimal quadratic
communication complexity.
For ease of exposition, we construct our protocol in steps.
First, in Section~\ref{sub:algGeneral}, we present the local state
each party maintains, define the \emph{leader-based view
(LBV)}~\cite{ace} building block, which is used by both protocols, and
present an implementation with $O(n)$ communication complexity.
Then, in Section~\ref{sub:algSynch}, we describe our synchronous
protocol, and in Section~\ref{sub:algAsynch} we use the LBV building
block to reconstruct VABA~\cite{VABA} - an asynchronous Byzantine
agreement protocol with expected $O(n^2)$ communication cost and $O(1)$
running time.
Finally, in section~\ref{sub:algOptimistic}, we safely
combine both protocols to prove the following:

\begin{reptheorem}{theorem:opt}[restated]

There is an authenticated Byzantine agreement protocol with $O(ft+t)$
communication complexity in synchronous runs and expected $O(t^2)$ communication
complexity in all other runs. 

\end{reptheorem}

\noindent A formal correctness proof and communication analysis of the
protocol appear in Appendix~\ref{app:upper}.

\subsection{General structure}
\label{sub:algGeneral}

The protocol uses many instances of the LBV building
block, each of which is parametrized with
a sequence number and a leader. 
We denote an LBV instance that is parametrized with sequence number $sq$
and a leader $p_l$ as $\textit{LBV}(sq,p_l)$. Each party in the protocol
maintains a local state, which is used by all LBVs and is updated
according to their returned values.
Section~\ref{subsub:state} presents the local state and 
Section~\ref{subsub:LBV} describes a linear communication LBV
implementation.
Section~\ref{sub:composition} discusses the properties guaranteed
by a sequential composition of several LBV instances.

\subsubsection{Local state}
\label{subsub:state}

The local state each party maintains is presented in
Algorithm~\ref{alg:state}.
For every possible sequence number $sq$, $\emph{LEADER}[sq]$ stores
the party that is chosen (a priori or in retrospect) to be the leader associated with $sq$.
The \emph{COMMIT} variable is a tuple that consists of a value $val$,
a sequence number $sq$ s.t.\ $val$ was
committed in LBV(sq,\emph{LEADERS}[sq]), 
and a threshold signature that is used as a proof of it.
The \emph{VALUE} variable contains a safe value to propose and the
\emph{KEY} variable is used as proof that \emph{VALUE} is indeed
safe.
\emph{KEY} contains a sequence number $sq$ and a threshold signature
that proves that no value other than \emph{VALUE} could be committed
in LBV(sq,\emph{LEADERS}[sq]).
The \emph{LOCK} variable stores a sequence number $sq$, which is used
to determine what keys are up-to-date and what are obsolete -- a key
is up-to-date if it contains a sequence number that is greater than
or equal to \emph{LOCK}.

\begin{algorithm}
\caption{Local state initialization.}
 \begin{algorithmic}[1]
\small

\Statex $\emph{LOCK} \in \mathbb{N} \cup \{\bot\}$, initially $\bot$
\Statex $\emph{KEY} \in (\mathbb{N} \times
\{0,1\}^*) \cup\{\bot\}$ with selectors \emph{sq} and \emph{proof},
initially $\bot$ 
\Statex $\emph{VALUE} \in \mathbb{V}\cup
\{\bot\}$, initially $\bot$
\Statex $\emph{COMMIT} \in  (\mathbb{V} \times
\mathbb{N}\times \{0,1\}^*)\cup\{\bot\}$ with selectors \emph{val},
\emph{sq} and \emph{proof}, initially $\bot$  
\Statex \textbf{for every} $sq \in \mathbb{N}$, $\emph{LEADER}[sq]
\in \Pi \cup \{\bot\}$, initially $\bot$

\end{algorithmic}
\label{alg:state}
\end{algorithm}

\subsubsection{Linear leader-based view}
\label{subsub:LBV}

Detailed pseudocode of the linear
implementation of the LBV building block is given in
Algorithms~\ref{alg:LBV-API} and~\ref{alg:LBV-messages}.
An illustration appears in figure~\ref{fig:LBV}.
The LBV building block supports an API to \emph{start the view}
and \emph{wedge the view}.
Upon a \lv{\lr{sq, p_l}} invocation, the invoking party starts
processing messages associated with LBV(sq,$p_l$).
When the leader $p_l$ invokes \lv{\lr{sq, p_l}} it initiates $3$ 
steps of leader-to-all and all-to-leader communication, named
\emph{PreKeyStep}, \emph{KeyStep}, and \emph{LockStep}.
In each step, the leader sends its \emph{VALUE}
together with a threshold signature that proves the safety of the
value for the current step and then waits to collect $n-t$ valid
replies.
A party that gets a message from the leader, validates that the
received value and proof are valid for the current step, then
produces its signature share on a message that contains the value and
the step's name, and sends the share back to the leader.
When the leader gets $n-t$ valid shares, it combines them into a
threshold signature and continues to the next step.
After successfully generating the threshold signature at the end of
the third step (\emph{LockStep}), the leader has a commit certificate
which he sends together with its \emph{VALUE} to all parties.

\begin{algorithm}[H]
\caption{A linear implementation of LBV(sq,\emph{leader}):
API for a party $p_i$.}
\begin{algorithmic}[1]
\footnotesize

\Statex \textbf{Local variables initialization:}
\Statex \hspace*{0.4cm} $S_{\emph{key}} = S_{\emph{lock}} =
S_{\emph{commit}} = \{\}$
\Statex \hspace*{0.4cm} $\textbf{keyProof}, \textbf{lockProof},
\textbf{commitProof} \in (\mathbb{V} \times \{0,1\}^*) \cup\{\bot\}$
\hspace*{3cm} with selectors \emph{val} and \emph{proof}, initially
$\bot$ \Statex \hspace*{0.4cm} $\emph{active} \gets \emph{true}$
; $\emph{done} \gets \emph{false}$




\Upon{wedgeView($\emph{sq}, \emph{leader}$)
invocation}

	\State $\emph{active} \gets \emph{false}$
	\State return $\lr{\textbf{keyProof}, \textbf{lockProof},
	\textbf{commitProof}}$

\EndUpon

\Statex

\Upon{startView($\emph{sq}, \emph{leader}$) invocation}

\State start processing received messages associated with \emph{sq}
and
\emph{leader}
\If{\emph{leader} = $p_i$}
	
	\State \hspace*{1cm} \textcolor{gray}{**//first step//**}
	\State send ``$\textsc{preKeyStep}, sq, leader,
	\emph{VALUE}, \emph{KEY}$'' to all parties
	\State \textbf{wait} until $|S_{\emph{key}}| = n-t$
	\State $\nu_k \gets \emph{threshold-sign}(S_{\emph{key}})$
	
	\Statex
	\State \hspace*{1cm} \textcolor{gray}{**//second step//**}
	\State send ``$\textsc{KeyStep}, sq, leader,
	\emph{VALUE}, \nu_k$'' to all parties
	\State \textbf{wait} until $|S_{\emph{lock}}| = n-t$
	\State $\nu_l \gets \emph{threshold-sign}(S_{\emph{lock}})$
	
	\State \hspace*{1cm} \textcolor{gray}{**//third step//**}
	\State send ``$\textsc{lockStep}, sq, leader,
	\emph{VALUE}, \nu_l$'' to all parties
	\State \textbf{wait} until $|S_{\emph{commit}}| = n-t$
	\State $\nu_c \gets \emph{threshold-sign}(S_{\emph{commit}})$
	
	\State \hspace*{1cm} \textcolor{gray}{**//broadcast the commit//**}
	\State send ``$\textsc{commit}, sq, leader,
	\emph{VALUE}, \nu_c$'' to all parties 

\EndIf

\State \textbf{wait} for $\emph{done} = \emph{true}$
\State return $\lr{\textbf{keyProof}, \textbf{lockProof},
	\textbf{commitProof}}$ 
 
\EndUpon

\Statex


\end{algorithmic}
\label{alg:LBV-API}
\end{algorithm}

In addition to validating and share-signing messages, parties also
store the values and proofs they receive.
The \textbf{keyProof} and \textbf{lockProof} variables store tuples
consisting of the values and the threshold signatures received from the
leader in the \emph{KeyStep}, and \emph{LockStep} steps, respectively.
The \textbf{commitProof} variable stores the received value and the
commit certificate.
When a party receives a valid commit certificate from the leader
it returns.

As for the validation of the leader's messages, parties distinguish
the \emph{PreKeyStep} message from the rest.
For \emph{KeyStep}, \emph{LockStep} and commit certificate messages,
parties simply check that the attached proof is a valid threshold
signature on the leader's value and the previous step name. 
The \emph{PreKeyStep} message, however, is used by the Agreement
protocols to safely compose many LBV instances.
We describe this mechanism in more details below, but to develop some
intuition let us first present the properties guaranteed by a single
LBV instance:

\begin{itemize}
  
 \item Commit causality: If a party gets a valid commit certificate,
 then at least $t+1$ honest parties previously got a valid
 \textbf{lockProof}.
  
 \item Lock causality: If a party gets a valid \textbf{lockProof},
 then at least $t+1$ honest parties previously got a valid
 \textbf{keyProof}.
  
 \item Safety: All valid \textbf{keyProof}, \textbf{lockProof}, and
 commit certificates obtained in the same LBV have the same value.
  
\end{itemize}

\begin{algorithm}
\caption{A linear implementation of LBV(sq,\emph{leader}):
$p_i$'s message handlers.}
\begin{algorithmic}[1]
 \footnotesize

\Receiving{``$\textsc{preKeyStep}, sq, \emph{leader}, \emph{value},
\emph{key}$'' from \emph{leader}}
	\If{\emph{active} $\wedge$ ($\emph{LOCK} = \bot $ $\vee$
	$\emph{key.sq} \geq \emph{LOCK}$)}
	\Comment{key is up-to-date}
	
		\If{$\emph{key} = \bot$ $\vee$
		$\emph{threshold-validate}(\lr{\textsc{preKeyStep}, \emph{key.sq},
		$\\ \hspace*{2cm} $
		\emph{LEADERS}[\emph{key.sq}],\emph{value}}, \emph{key.proof})$)}
			
			\State $\rho_k \gets \emph{share-sign}_i(\lr{\textsc{preKeyStep},
			sq, \emph{leader}, \emph{value}})$
			 
			\State send ``$\textsc{keyShare}, sq, leader, \rho_k$'' to
			\emph{leader}
		
		\EndIf
	
	\EndIf 

\EndReceiving

\Statex

\Receiving{``$\textsc{keyShare}, sq, leader, \rho_k$'' from 
$p_j$ for the first time}

	\If{$leader = p_i$}
	
		\If{\emph{share-validate}($\lr{\textsc{preKeyStep}, sq, \emph{leader},
	\emph{VALUE}}, p_j, \rho_k)$}
	
			\State $S_{\emph{key}} \gets S_{\emph{key}} \cup \rho_k$	
	
		\EndIf
	\EndIf

\EndReceiving

\Statex

\Receiving{``$\textsc{KeyStep}, sq, \emph{leader}, \emph{value},
\nu_k$'' from \emph{leader}}
	\If{\emph{active}}
		\If{$\emph{threshold-validate}(\lr{\textsc{preKeyStep}, sq,
	\emph{leader}, \emph{value}}, \nu_k)$}
			\State $\textbf{keyProof} \gets \lr{value, \nu_k}$
			\State $\rho_l \gets \emph{share-sign}_i(\lr{\textsc{KeyStep},
			sq, \emph{leader}, \emph{value}})$
			 
			\State send ``$\textsc{lockShare}, sq, leader, \rho_l$'' to
			\emph{leader}
		\EndIf
	
	\EndIf 

\EndReceiving

\Statex

\Receiving{``$\textsc{lockShare}, sq, leader, \rho_l$'' from
$p_j$ for the first time}

	\If{$leader = p_i$}
	
		\If{\emph{share-validate}($\lr{\textsc{KeyStep}, sq, \emph{leader},
 		\emph{VALUE}}, p_j, \rho_l)$}
		
			\State $S_{\emph{lock}} \gets S_{\emph{lock}} \cup \rho_l$	
		
		\EndIf
	
	\EndIf

\EndReceiving

\Statex

\Receiving{``$\textsc{lockStep}, sq, \emph{leader}, \emph{value},
\nu_l$'' from \emph{leader}}
	\If{\emph{active}}
	
		\If{$\emph{threshold-validate}(\lr{\textsc{KeyStep}, sq,
	\emph{leader}, \emph{value}}, \nu_l)$}
			\State $\textbf{lockProof} \gets \lr{value, \nu_l}$
			\State $\rho_c \gets \emph{share-sign}_i(\lr{\textsc{lockStep},
			sq, \emph{leader}, \emph{value}})$
			 
			\State send ``$\textsc{commitShare}, sq, leader, \rho_c$'' to
			\emph{leader}
		\EndIf
	
	\EndIf 

\EndReceiving

\Statex

\Receiving{``$\textsc{commitShare}, sq, leader, \rho_c$'' from
$p_j$ for the first time}

	\If{$leader = p_i$}
	
		\If{\emph{share-validate}($\lr{\textsc{lockStep}, sq, \emph{leader},
		\emph{VALUE}}, p_j, \rho_c)$}
	
			\State $S_{\emph{commit}} \gets S_{\emph{commit}} \cup \rho_c$	
		\EndIf
	\EndIf

\EndReceiving

\Statex

\Receiving{``$\textsc{commit}, sq, \emph{leader}, \emph{value},
\nu_c$'' from \emph{leader}}
	\If{\emph{active}}
	
		\If{$\emph{threshold-validate}(\lr{\textsc{lockStep}, sq,
		\emph{leader}, \emph{value}}, \nu_c)$}
				\State $\textbf{commitProof} \gets \lr{value, \nu_c}$
 				\State $\emph{done} \gets \emph{true}$
	\EndIf
	
	\EndIf 

\EndReceiving

\end{algorithmic}
\label{alg:LBV-messages}
\end{algorithm}

The validation of the \emph{PreKeyMessage} in \emph{PreKeyStep} makes
sure that the leader's value satisfies the safety properties of the
Byzantine agreement protocol that sequentially composes and
operates several LBVs.
The \emph{PreKeyMessage} contains the leader's \emph{VALUE} and
\emph{KEY}, where \emph{KEY} stores the last (non-empty)
\textbf{keyProof} returned by a previous LBV instance together with
the LBV's sequence number.
When a party gets a \emph{PreKeyMessage} it first validates, by
checking the key's sequence number $sq$, that the attached key was
obtained in an LBV instance that does not precede the one the party is locked on (the sequence number that is stored in the party's
\emph{LOCK} variable).
Then, the party checks that the threshold signature in the key (1) was
generated at the end of the \emph{PreKeyStep} step (it is a valid
\textbf{keyProof}) in LBV(sq,\emph{LEADER[sk]}); and (2) it is a
valid signature on a message that contains the leader's \emph{VALUE}.
Note that if the party is not locked ($\emph{LOCK} = \bot$) then a key is not required.

Upon a \wv{sq, p_l} invocation, the invoking party stops
participating in LBV(sq,$p_l$) and returns its current 
\textbf{keyProof}, \textbf{lockProof}, and \textbf{commitProof} values.
These values are used by both synchronous and asynchronous protocols,
which are built on top of LBV instances, to update the \emph{LOCK},
\emph{KEY}, \emph{VALUE}, and \emph{COMMIT} variables in parties' local
states.
Stopping participating in LBV(sq,$p_l$) upon a \wv{sq, p_l} invocation
guarantees that the the LBVs' causality guarantees are propagated the
\emph{KEY}, \emph{LOCK}, and \emph{COMMIT} variables in parties
local states.\\

\textbf{Communication complexity.}
Note that the number of messages sent among honest parties in an LBV
instance is $O(n) = O(t)$.
In addition, since signatures are not accumulated -- leaders use
threshold signatures -- each
message contains a constant number of words, and thus the total
communication cost of an LBV instance is $O(t)$ words.

\begin{figure*}[th]
  \begin{center}
    \includegraphics[width=0.75\textwidth]{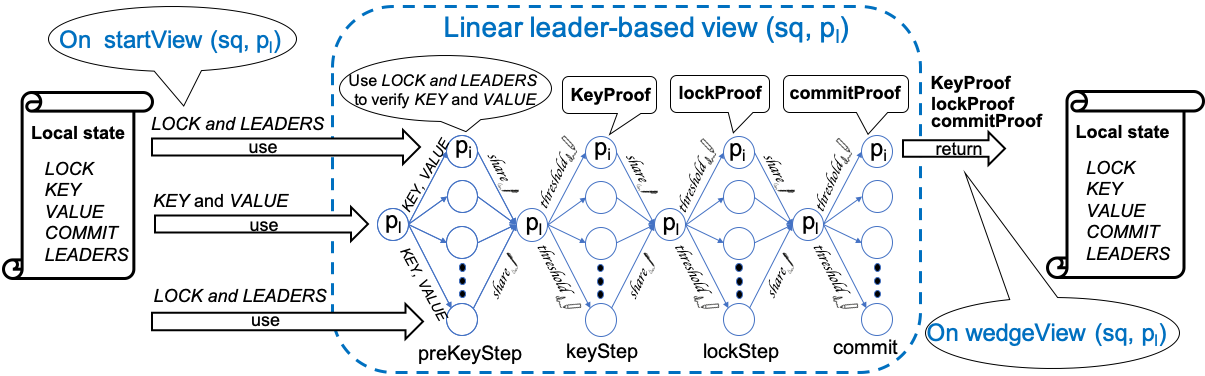}
  \end{center}
  \caption{A linear communication LBV illustration. The local state is
  used by and updated after each instance. The \textbf{keyProof},
  \textbf{lockProof}, and \textbf{commitProof} are returned when a
  commit message is received from the leader or \emph{wedgeView} is
  invoked.}
  \label{fig:LBV}
\end{figure*} 

\subsubsection{Sequential composition of LBVs}
\label{sub:composition}

As mentioned above, our optimistic Byzantine agreement protocol is
built on top of the LBV building blocks.
The synchronous and the asynchronous parts of the protocol use different approaches, but they both sequentially compose
LBVs - the synchronous part of the protocol determines the
composition in advance, whereas the asynchronous part chooses what
instances are part of the composition in retrospect.

In a nutshell, a sequential composition of LBVs operates as
follows: parties start an LBV instance by invoking
$\mathsf{startView}$ and at some later time (depends on the approach)
invoke $\mathsf{wedgeView}$ and update their local states with the
returned values.
Then, they exchange messages to propagate information (e.g.,
up-to-date keys or commit certificates), update their local states
again and start the next LBV (via $\mathsf{startView}$ invocation).
We claim that an agreement protocol that sequentially
composes LBV instances and maintains the local state in
Algorithm~\ref{alg:state} has the following properties:

\begin{itemize}
  
 \item Agreement: all commit certificates in all LBV instances have
 the same value.

 \item Conditional progress: for every LBV instance, if the leader is
 honest, all honest parties invoke \emph{startView}, and all messages
 among honest parties are delivered before some honest party invokes
 \emph{wedgeView}, then all honest parties get a commit certificate.
  
\end{itemize}

Intuitively, by the LBV's commit causality property,
if some party returns a valid commit certificate (\textbf{commitProof})
with a value $v$ in some LBV(sq,$p_i$), then at least $t+1$ honest
parties return a valid \textbf{lockProof} and thus lock on $sq$
($\emph{LOCK} \gets sq $).
Therefore, since the leader of the next LBV needs the cooperation of
$n-t$ parties to generate threshold signatures, its
\emph{PreKeyStep} message must include a valid \textbf{keyProof} that
was obtained in LBV(sq,$p_i$).
By the LBV's safety property, this \textbf{keyProof}
includes the value $v$ and thus $v$ is the only value the leader can
propose.
The agreement property follows by induction.

As for conditional progress, we have to make sure that honest leaders
are able to drive progress. 
Thus, we must ensure that all honest leaders have the most
up-to-date keys.
By the lock causality property, if some party gets a valid
\textbf{lockProof} in some LBV, then at least $t+1$ honest
parties get a valid \textbf{keyProof} in this LBV and thus are able
to unlock all honest parties in the next LBV.
Therefore, leaders can get the up-to-date key by 
querying a quorum of $n-t$ parties. 

From the above, any Byzantine agreement protocol that sequentially
composes LBVs satisfies Agreement.
The challenge, which we address in the rest of this section, is how
to sequentially compose LBVs in a way that satisfies
Termination with asymptotically optimal communication
complexity under all network conditions and failure scenarios.

\subsection{Adaptive to failures synchronous protocol}
\label{sub:algSynch}

\begin{algorithm}
\caption{Adaptive synchronous protocol: Procedure for a party~$p_i$.}


\begin{algorithmic}[1]
\footnotesize

\Upon{Synch-propose($v_i$)}

	\State $\emph{VALUE} \gets v_i$
	\State \emph{tryOptimistic()}

\EndUpon

\Statex 

\Procedure{tryOptimistic()}{}

\State $\mathsf{trySynchrony(1, p_1, 7\Delta)}$

\For{$j \gets 2 $ to $n$}

	\If{$i \neq j$}
	
		\State $\mathsf{trySynchrony(j, p_j, 9\Delta)}$
	
	\ElsIf{$\emph{COMMIT} = \bot $}
	
		\State send ``$\textsc{keyRequest}$'' to all parties
		\State \textbf{wait} for $2\Delta$ time 
		\State $\mathsf{trySynchrony(j, p_j, 7\Delta)}$

	\EndIf

\EndFor

\EndProcedure

\Statex

\algstore{myalg}
\end{algorithmic}

  \begin{algorithmic}[1]
 \footnotesize
\algrestore{myalg}

\Procedure{trySynchrony}{$sq, leader, T$}

	\State invoke \lv{\emph{sq},\emph{leader}}
	\Comment{non-blocking invocation}
	\State \textbf{wait} for $T$ time
	\State $\lr{\emph{keyProof}, \emph{lockProof}, \emph{commitProof}}
	\gets$ \wv{\emph{sq},\emph{leader}} 
	\State $\mathsf{updateState(\emph{sq},\emph{leader}, \emph{keyProof},
	\emph{lockProof}, \emph{commitProof})}$

\EndProcedure

\Statex

\Receiving{``$\textsc{keyRequest}$'' from party $p_k$ for the first
time}

	\State send ``$\textsc{keyReply}, \emph{KEY}, \emph{VALUE}$'' to
	party $p_k$

\EndReceiving

\Statex

\Receiving{``\textsc{keyReply}, \emph{key}, \emph{value}''}

	\State $\mathsf{check\&updateKey(key, value)}$

\EndReceiving


\end{algorithmic}
\label{alg:synch}
\end{algorithm}

In this section, we describe a synchronous Byzantine agreement
protocol with an asymptotically optimal adaptive communication cost that matches the lower bound in Theorem~\ref{theorem:S}.
Namely, we prove the following Theorem:

\begin{reptheorem}{theorem:sync}[restated]

There is a deterministic synchronous authenticated Byzantine agreement
protocol with $O(ft+t)$ communication complexity.

\end{reptheorem}

\begin{figure*}[th]
  \begin{center}
    \includegraphics[width=0.85\textwidth]{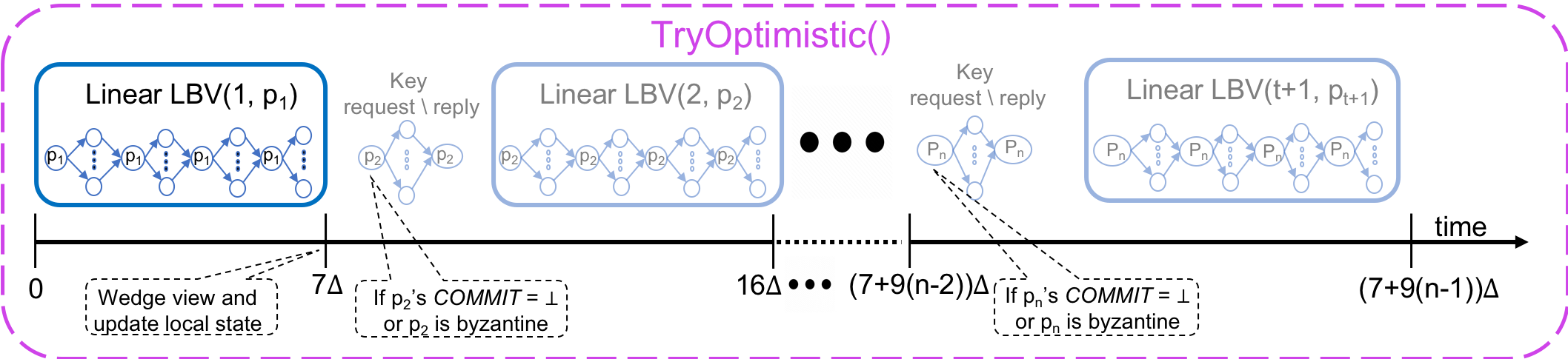}
  \end{center}
  \caption{Illustration of the adaptive synchronous protocol.
  Shaded LBVs are not executed if their leaders have previously
  decided.}
  \label{fig:synch}
\end{figure*} 

A detailed pseudocode is given in Algorithms~\ref{alg:synch}
and~\ref{alg:synchauxiliary}, and an illustration appears in
figure~\ref{fig:synch}.
The protocol sequentially composes $n$ pre-defined LBV
instances, each with a different leader, and parties decide $v$
whenever they get a commit certificate with $v$ in one of them.
To avoid the costly view-change mechanism that is usually unavoidable
in leader-based protocols, parties exploit synchrony to coordinate
their actions.
That is, all the
$\mathsf{startView}$ and $\mathsf{wedgeView}$ invocation times are
predefined, e.g., the first LBV starts at time 0 and is wedged at time
$7\Delta$ simultaneously by all honest parties.  
In addition, to make sure honest leaders can drive
progress, each leader (except the first) learns the up-to-date key,
before invoking $\mathsf{startView}$, by querying all parties and
waiting for a quorum of $n-t$ parties to reply.

\begin{algorithm}[h]
\caption{Auxiliary procedures to update local state.}
\begin{algorithmic}[1]
\footnotesize

\Procedure{updateState}{$sq, \emph{leader}, \emph{keyProof},
{\emph{lockProof}},
{\emph{commitProof}}$}

	\State $\emph{LEADERS}[sq] \gets \emph{leader}$
  	\If{$\emph{keyProof} \neq \bot$}
  	
  		\State $\emph{KEY} \gets \lr{sq, \emph{keyProof.proof}}$
  		\State $\emph{VALUE} \gets \emph{keyProof.val}$
  	
  	\EndIf
  	
  	\If{$\emph{lockProof} \neq \bot$}
  	
  		\State $\emph{LOCK} \gets sq$
  	
  	\EndIf
	
 	\If{$\emph{commitProof} \neq \bot$}

 		\State $\emph{COMMIT} \gets \lr{\emph{commitProof.val}, sq,
 		\emph{commitProof.proof}}$
 		\State \textbf{decide} \emph{COMMIT.val}
 	
 	\EndIf

\EndProcedure

\Statex
\Statex

\Procedure{check$\&$updateKey}{$key, value$}

	\If{$(\emph{KEY} = \bot \vee \emph{key.sq} > \emph{KEY.sq})$}
		
		\If{$\emph{threshold-validate}(\lr{\textsc{preKeyStep}, \emph{key.sq},
		$\\ \hspace*{2.3cm} $ \emph{LEADER}[\emph{key.sq}], \emph{value}}, \emph{key.proof})$}
		
			\State $\emph{KEY} \gets \emph{key}$
			\State $\emph{VALUE} \gets \emph{value}$
		
		\EndIf
		
	\EndIf

\EndProcedure

\Statex

\Procedure{check$\&$updateCommit}{\emph{commit}}

	\If{$\emph{COMMIT} = \bot$}

		\If{$\emph{threshold-validate}(\lr{\textsc{lockStep},
		\emph{commit.sq}, $\\ \hspace*{1cm} $
		\emph{LEADER}[\emph{commit.sq}],
		 \emph{commit.val}}, \emph{commit.proof})$}
			
			\State $\emph{COMMIT} \gets \emph{commit}$
			\State \textbf{decide} \emph{COMMIT.val} 
	
		\EndIf
	
	\EndIf
	
\EndProcedure

\end{algorithmic}
\label{alg:synchauxiliary}
\end{algorithm}

Composing $n$ LBV instances may lead in the
worst case to $O(t^2)$ communication complexity -- $O(t)$ for every LBV
instance.
Therefore, to achieve the optimal adaptive complexity, honest
leaders in our protocol participate (learn the up-to-date key and
invoke $\mathsf{startView}$) only in case they have not yet decided.
(Note that the communication cost of an LBV instance in which the leader does not invoke $\mathsf{startView}$ is 0 because other
parties only reply to the leader's messages.)
For example, if the leader of the second LBV instance is honest and
has committed a value in the first instance (its $\emph{COMMIT} \neq
\bot$ at time $7\Delta$), then no message is sent among honest
parties between time $7\Delta$ and time $16\Delta$.

\begin{algorithm}
\caption{Asynchronous fallback: Protocol for a party $p_i$.}
 
\begin{algorithmic}[1]
\footnotesize


\Upon{Asynch-propose($v_i$)}

	\State $\emph{VALUE} \gets v_i$
	\State \emph{fallback}($1$)

\EndUpon


\Statex

\Procedure{fallback}{$sq_{init}$}

	\CSTATE $\emph{RRleader} \gets 1$
	\State $sq \gets sq_{init} + 1$
	\While{\emph{true}}
	
		\State $\mathsf{wave}(sq)$
		\State \we{\emph{sq}}
		\State $\mathsf{help\&tryHalting}(sq)$
		\CSTATE $\mathsf{trySynchrony}(sq + 1, RRleader, 8\Delta)$
		\CSTATE \we{\emph{sq+1}}
		\CSTATE $\mathsf{help\&tryHalting(sq+1)}$
		\CSTATE $\emph{RRleader} \gets \emph{RRleader} + 1 \text{ mod }
		|\Pi|$ \State $sq \gets sq + 2$

	\EndWhile
	
\EndProcedure

\Statex 

\Upon{\lv{\emph{sq},p_j} returns}
 	\State send ``$\textsc{your-view-done}, sq$'' to party $p_j$
 
\EndUpon


 \Statex

\Procedure{wave}{$sq$}

	\ForAll{$p_j=p_1,\ldots,p_n$}
		\State invoke \lv{\emph{sq},p_j}
		\Comment{non-blocking invocation}
	\EndFor
	
	\State \barriersync{\emph{sq}}
 	\Comment{blocking}

	\State $\emph{leader}  \gets $ \elect{\emph{sq}}
	\State $\lr{\emph{keyProof}, \emph{lockProof}, \emph{commitProof}}
	 \gets$ \wv{sq, \emph{leader}} 
	\State $\mathsf{updateState(sq, \emph{leader}, \emph{keyProof},
	\emph{lockProof}, \emph{commitProof})}$

\EndProcedure

\Statex

\Receiving{$n-t$ ``$\textsc{your-view-done}, sq$'' messages}
 
	\State invoke \barrierready{sq}
	\Comment{note that $n-t$ parties must invoke it for
	\barriersync{\emph{sq}} to return}
 
\EndReceiving


\Procedure{exchangeState}{$sq$}

	\State send ``$\textsc{exchange}, sq, \emph{KEY},
	\emph{VALUE}, \emph{COMMIT}$'' to all parties
	\State \textbf{wait} for $n-t$ ``$\textsc{exchange}, sq,*,*$'' messages from different parties

\EndProcedure

\Statex

\Receiving{``$\textsc{exchange}, sq, \emph{key}, \emph{value},
\emph{commit}$''}
 
	\State $\mathsf{check\&updateKey(\emph{key}, \emph{value})}$
	\State $\mathsf{check\&updateCommit(\emph{commit})}$
 
\EndReceiving

\end{algorithmic}
\label{alg:asynch}
\end{algorithm}

\textbf{Termination and communication complexity.}
A naive approach to guarantee termination and avoid an infinite number
of LBV instances in a leader based Byzantine agreement protocols is
to perform a costly communication phase after each LBV instance.
One common approach is to reliably broadcast commit certificates
before halting, while a complementary one is to halt unless
receiving a quorum of complaints from parties that did not decide.
In both cases, the communication cost is $O(t^2)$ even in runs
with at most one failure. 

The key idea of our synchronous protocol is to exploit synchrony in
order to allow honest parties to learn the decision value and at the same time 
help others in a small number of messages.
Instead of complaining (together) after every unsuccessful LBV
instance, each party has its own pre-defined time to
``complain'', in which it learns the up-to-date key and value and
helps others decide via the LBV instance in which it acts as the
leader.

By the conditional progress property and the synchrony assumption,
all honest parties get a commit certificate in LBV instances with
honest leaders. 
Therefore, the termination property is guaranteed since every honest party has its own pre-defined LBV instance, which it invokes only
in case it has not yet decided.
As for the protocol's total communication cost, recall that the
LBV's communication cost is $O(t)$ in the worst case and $0$ in
case its leader already decided and thus does not participate.
In addition, since all honest parties get a commit certificate in the
first LBV instance with an honest leader, we get that the message
cost of all later LBV instances with honest leaders is $0$.
Therefore, the total communication cost of the protocol is
$O(ft+t)$ -- at most $f$ LBVs with Byzantine leaders
and $1$ LBV with an honest one.

\begin{algorithm}[H]
    \caption{Barrier synchronization and Leader-election: protocol for
    a party $p_i$.}
    \label{alg:barrierAndLE}

       
\begin{algorithmic}[1]
\footnotesize

\Statex \textbf{Local variables for Barrier synchronization:}
\State $S_{\emph{barrier}} \gets \{\}$; $\emph{READY} \gets
\emph{false}$

\Statex

\Procedure{barrier-sync}{$sq$}

	\State \textbf{wait} until $\emph{READY} = \emph{true}$

\EndProcedure

\Statex

\Procedure{barrier-ready}{$sq$}

 	\State $\rho \gets \emph{share-sign}_i(\lr{\textsc{shareReady},
 	sq})$ 
 	\State send ``$\textsc{shareReady},sq, \rho$'' to all parties

\EndProcedure

\Statex

\Receiving{``$\textsc{shareReady},sq, \rho$'' from a party $p_j$}
 
	\If{$\emph{share-validate}(\lr{\textsc{shareReady}, sq}, p_j,
	\rho)$}
	
		\State $S_{\emph{barrier}} \gets S_{\emph{barrier}} \cup
		\{\rho\}$
		
		\If{$|S_{\emph{barrier}}| = n-t$}
		
			\State  $\nu \gets \emph{threshold-sign}(S_{\emph{barrierReady}})$
			\State  send ``$\textsc{barrierReady},sq, \nu$'' to all parties

		\EndIf
	
	\EndIf
	
\EndReceiving

\Statex

\Receiving{``$\textsc{barrierReady}, sq, \nu$''}

	\If{$\emph{threshold-validate}(\lr{\textsc{barrierReady}, sq},
	\nu)$}
	
 		\State  send ``$\textsc{barrierReady},sq, \nu$'' to all parties
 		\State $\emph{READY} \gets \emph{true}$ 
	
	\EndIf

\EndReceiving

\algstore{myalg}
\end{algorithmic}

  \begin{algorithmic}[1]
\small
\algrestore{myalg}

\Statex

\Statex \textbf{Local variables for Leader election:}
\State $S_{coin} \gets \{\}$

\Statex

\Procedure{elect}{$sq$}

	\State $\rho \gets \emph{share-sign}_i(sq)$
	\State send ``$\textsc{coinShare},sq,\rho$'' to all parties
	\State \textbf{wait} until $|S_{coin}| = t+1$
	\State  $\nu \gets \emph{threshold-sign}(S_{coin})$
	\State return $p_j$ s.t.\  $j = Hash(\nu) \text{ mod } |\Pi|$
	

\EndProcedure

\Statex

 \Receiving{``$\textsc{coinShare},sq,\rho$'' from $p_j$}
 
 	\If{$\emph{share-validate}(sq,p_j,\rho)$}
 	
 		\State $S_{coin} \gets S_{coin} \cup \{\rho\}$
 	
 	\EndIf
 
 \EndReceiving

\end{algorithmic}

\end{algorithm}

\subsection{Asynchronous fallback}
\label{sub:algAsynch}

In this section, we use the LBV building block to reconstruct
VABA~\cite{VABA}.
Note that achieving an optimal asynchronous protocol is not a
contribution of this paper but reconstructing the VABA protocol with
our LBV building block allows us to safely combine it with
our adaptive synchronous protocol to achieve an optimal optimistic
one.
In addition, we also improve the protocol of VABA in the following
ways: first, parties in VABA~\cite{VABA} never halt, meaning that even
though they decide in expectation in a constant number of
rounds, they operate an unbounded number of them.
We fix it by adding an auxiliary primitive, we call
\emph{help\&tryHalting} in between two consecutive waves (details
below).
Second, VABA guarantees probabilistic termination in all
runs, whereas our version also guarantees standard
termination in eventually synchronous runs.
The full detailed pseudocode of our fallback protocol appears in
Algorithms~\ref{alg:synchauxiliary} , \ref{alg:asynch},  
\ref{alg:barrierAndLE}, and~\ref{alg:halting}.

On a high level, the idea in VABA~\cite{VABA} that was
later generalized in~\cite{ace} is the following: instead of having
a pre-defined leader in every ``round'' of the protocol as most
eventually synchronous protocols and our synchronous protocol have,
they let $n$ leaders operate simultaneously and then randomly choose
one in retrospect.
This mechanism is implemented inside a wave and the agreement
protocol operates in a \emph{wave}-by-\emph{wave} manner s.t.\ 
parties exchange their local states between every two conductive
waves.
To ensure halting, in our version of the protocol, parties also
invoke the \emph{help\&tryHalting} procedure after each wave.
See the \emph{tryPessimistic} procedure in Algorithm~\ref{alg:asynch}
for pseudocode (ignore gray lines at this point) and
Figure~\ref{fig:asynchRun} for an illustration.

\begin{figure*}[th]
  \begin{center}
    \includegraphics[width=0.75\textwidth]{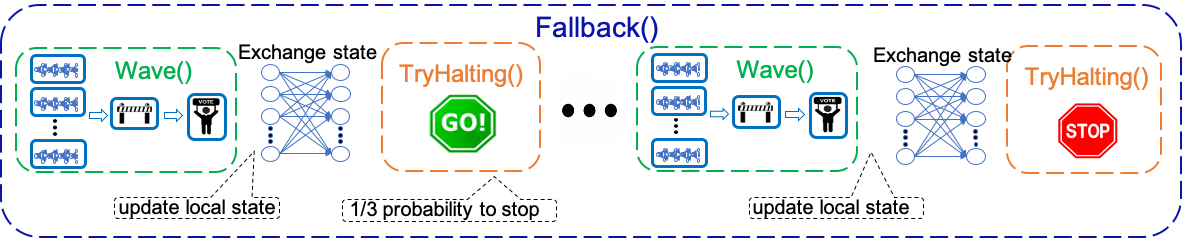}
  \end{center}
  \caption{Asynchronous fallback. Usig linear LBV to reconstruct
  the VABA~\cite{VABA} protocol.}
  \label{fig:asynchRun}
\end{figure*}

\textbf{Wave-by-wave approach.}
To implement the wave mechanism (Algorithm~\ref{alg:asynch}) we use
our LBV and two auxiliary primitives:
Leader-election and Barrier-synchronization
(Algorithm~\ref{alg:barrierAndLE}).
At the beginning of every wave, parties invoke, via
$\mathsf{startView}$, $n$ different LBV instances, each with a
different leader.
Then, parties are blocked in the Barrier-synchronization primitive
until at least $n-2t$ LBV instances 
\emph{complete}.
(An LBV completes when $t+1$ honest parties
get a commit certificate.)
Finally, parties use the Leader-election primitive to elect a unique
LBV instance, wedge it (via $\mathsf{wedgeView}$), and ignore
the rest.
With a probability of $1/3$ parties choose a completed LBV, which
guarantees that after the state exchange phase all honest parties get
a commit certificate, decide, and halt in the
\emph{help\&tryHalting} procedure.
Otherwise, parties update their local state and continue to the next
wave.
An illustration appears in figure~\ref{fig:wave}.

\begin{figure*}[th]
  \begin{center}
    \includegraphics[width=0.8\textwidth]{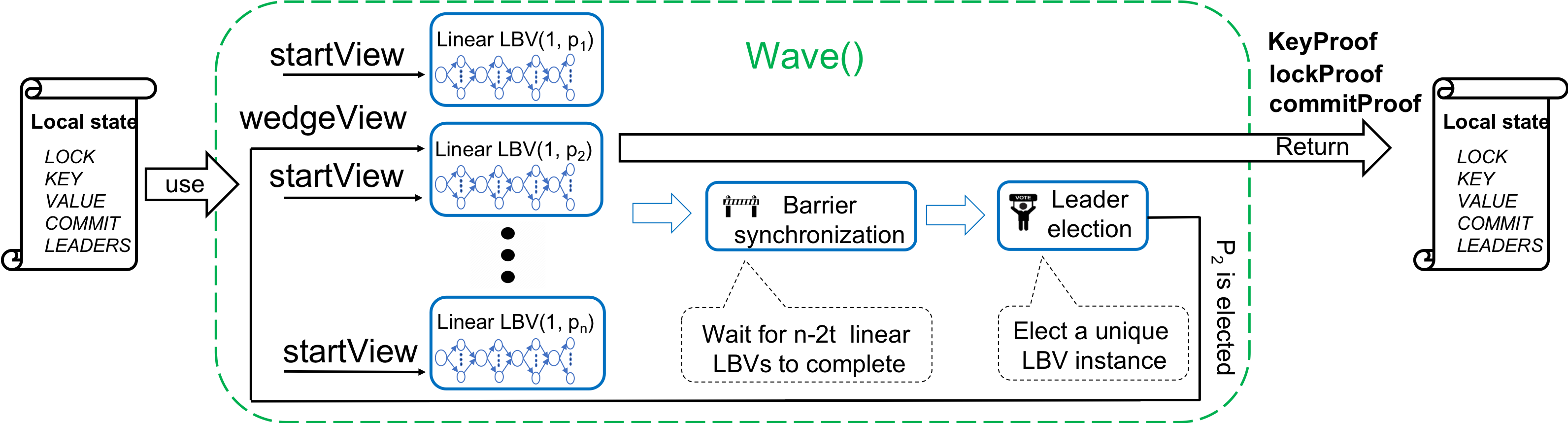}
  \end{center}
  \caption{An illustration of a single wave. The returned \textbf{keyProof},
  \textbf{lockProof}, and \textbf{commitProof} are taken from the
  elected LBV.}
  \label{fig:wave}
\end{figure*}

Since every wave has a probability of $1/3$ to choose a completed LBV
instance, the protocol guarantees probabilistic termination -- in
expectation, all honest parties decide after $3$ waves.
To also satisfy standard termination in eventually
synchronous runs, we ``try synchrony'' after each unsuccessful
wave.
See the gray lines in Algorithm~\ref{alg:asynch}.
Between every two conjunctive waves parties deterministically try to
commit a value in a pre-defined LBV instance.
The preceding \emph{help\&tryHalting} procedure
guarantees that after GST all honest parties invoke
$\mathsf{startView}$ in the pre-defined LBV instance with at most 
$1\Delta$ from each other and thus setting a timeout
to $8\Delta$ is enough for an honest leader to drive progress.
We describe the \emph{help\&tryHalting} procedure in the next section.
The description of the Barrier-synchronization and Leader-election
primitives (Algorithm~\ref{alg:barrierAndLE}) can be found
in~\cite{ace}.

\textbf{Communication complexity.}
The communication cost of the Barrier-synchronization and
Leader-election primitives, as well as that of $n$ LBV instances, is $O(n^2)$,
which brings us to a total of $O(n^2)$ cost per wave.
Since every wave have a probability of $1/3$ to choose a completed
LBV, the protocol operates $3$ waves in expectation.
Therefore, since the communication cost of state exchange and
\emph{help\&tryHalting} is $O(n^2)$, we get
that the total cost, in expectation, is~$O(n^2)$.

\subsection{Optimal optimistic protocol: combine the pieces}
\label{sub:algOptimistic}

\begin{figure*}[h]
  \begin{center}
    \includegraphics[width=0.75\textwidth]{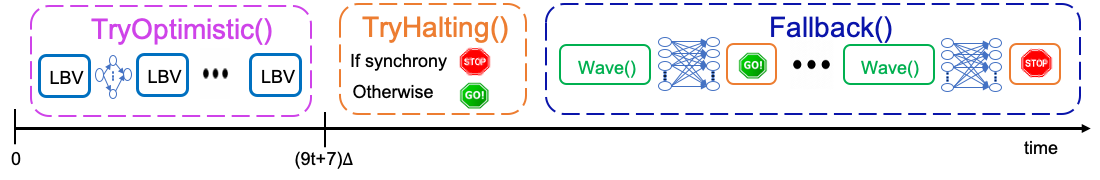}
  \end{center}
  \caption{Illustration of the optimistic protocol. Both parts form
  a sequential composition of LBV instances.}
  \label{fig:optimistic}
\end{figure*}

At a high level, parties first optimistically try the synchronous
protocol (of section~\ref{sub:algSynch}), then invoke
\emph{help\&tryHalting} and continue to the asynchronous
fallback (of section~\ref{sub:algAsynch}) in case a decision has not
been reached.
Pseudocode is given in Algorithm~\ref{alg:optimistic} and an
illustration appears in Figure~\ref{fig:optimistic}.
The parameters passed in Algorithm~\ref{alg:optimistic} 
synchronize the LBV sequence numbers across the different parts of the
protocol.

\begin{algorithm}[h]
\caption{Optimistic byzantine agreement: protocol for a party~$p_i$.} 
\begin{algorithmic}[1]
\footnotesize

\Upon{Optimistic-propose($v_i$)}

	\State $\emph{VALUE} \gets v_i$
	\State $\mathsf{tryOptimistic()}$
	\State $\mathsf{help\&tryHalting(n)}$
	\Comment{Blocking invocation}
	\State $\mathsf{fallback(n)}$

\EndUpon

\end{algorithmic}
\label{alg:optimistic}
\end{algorithm}

One of the biggest challenges in designing an agreement protocol as a
combination of other protocols is to make sure safety is preserved
across them.
Meaning that parties must never decide differently even if they decide
in different parts of the protocol.
In our protocol, however, this is inherently not a concern.
Since both parts use LBV as a building block, we get
safety for free.
That is, if we look at an execution of our protocol in
retrospect, i.e, ignore all LBVs that were not elected in the
asynchronous part.
Then the LBV instances in the synchronous part
together with the elected ones in the asynchronous part form a
sequential composition, which satisfies the Agreement property.

On the other hand, satisfying termination without sacrificing optimal
adaptive complexity is a non-trivial challenge.
Parties start the protocol by optimistically trying the synchronous
part, but unfortunately, at the end of the synchronous part they
cannot distinguish between the case in which the communication was
indeed synchronous and all honest parties decided and the case in
which some honest parties did not decide due to asynchrony.
Moreover, honest parties cannot distinguish between honest
parties that did not decide and thus wish to continue to the
asynchronous fallback part and Byzantine parties that want to move to
the fallback part to increase the communication cost.

To this end, we implement the \emph{help\&tryHalting} procedure, which
stops honest parties from moving to the fallback part in
synchronous runs.
The communication cost of \emph{help\&tryHalting} is $O(ft)$.
The idea is to help parties learn the decision value and move to the
fallback part only when the number of help request indicates that
the run is asynchronous.

\begin{algorithm}[H]
\caption{Help and try halting: Procedure for a party $p_i$.}
\begin{algorithmic}[1]
\footnotesize

\Statex \textbf{Local variables initialization:}
\Statex \hspace*{0.4cm} $S_{\emph{help}} = \{\}$; $\emph{HALT}
\gets true$

\Statex

\Procedure{$\mathsf{help\&tryHalting}$}{$sq$}

	\If{$\emph{COMMIT} = \bot$}
	
		\State $\rho \gets \emph{share-sign}_i(\lr{\textsc{helpRequest},
		sq})$ 
		\State send ``$\textsc{helpRequest},sq, \rho$'' to all parties 
	
	\EndIf
	\State \textbf{wait} until $\emph{HALT} = \emph{false}$

\EndProcedure

\Statex

\Receiving{``$\textsc{helpReply},sq, \emph{commit}$''}

 	\State $\mathsf{check\&updateCommit(\emph{commit})}$

\EndReceiving

\Statex

\Receiving{``$\textsc{helpRequest},sq, \rho$'' from a party $p_j$}
 
	\If{$\emph{share-validate}(\lr{\textsc{helpRequest}, sq}, p_j,
	\rho)$}
	
		\State $S_{\emph{help}} \gets S_{\emph{help}} \cup
		\{\rho\}$
		\State send ``$\textsc{helpReply},sq, \emph{COMMIT}$'' to $p_j$
		
		\If{$|S_{\emph{help}}| = t+1$}
		
			\State  $\nu \gets \emph{threshold-sign}(S_{\emph{help}})$
			\State  send ``$\textsc{complain},sq, \nu$'' to all parties

		\EndIf
	
	\EndIf
	
\EndReceiving

\Statex

\Receiving{``$\textsc{complain}, sq, \nu$''}

	\If{$\emph{threshold-validate}(\lr{\textsc{helpRequst}, sq},
	\nu)$}
	
 		\State  send ``$\textsc{complain},sq, \nu$'' to all parties
 		\State $\emph{HALT} \gets \emph{false}$ 
		 
	
	\EndIf

\EndReceiving

\Statex

\end{algorithmic}
\label{alg:halting}
\end{algorithm}

The pseudocode of \emph{help\&tryHalting} is given in
Algorithm~\ref{alg:halting} and an illustration appears in
Figure~\ref{fig:halting}.
Each honest party that has not yet decided sends a share signed
$\textsc{helpRequest}$ to all other parties.
When an honest party gets an $\textsc{helpRequest}$, the party
replies with its \emph{COMMIT} value, but if it gets $t+1$
$\textsc{helpRequest}$ messages, the party combines the shares to a
threshold signature and sends it in a $\textsc{complain}$ message to
all.
When an honest party gets a $\textsc{complain}$ message for the first
time, it echos the message to all parties and continues to the fallback
part.

\textbf{Termination.}
Consider two cases.
First, the parties move to the fallback part, in which case
(standard) termination is guaranteed in eventually
synchronous runs and probabilistic termination is guaranteed in
asynchronous runs.
Otherwise, less than $t+1$ parties send $\textsc{helpRequest}$ in
\emph{help\&tryHalting}, which implies that at least $t+1$ honest
parties decided and had a commit certificate before invoking
\emph{help\&tryHalting}.
Therefore, all honest parties that did not decide before invoking
\emph{help\&tryHalting} eventually get a $\textsc{helpReply}$ message with a commit certificate and decide as well.

Note that termination does not mean halting. 
In asynchronous runs, $\textsc{helpRequest}$ messages may be arbitrary
delayed and thus parties cannot halt the protocol after deciding in the
synchronous part.
However, it is well known and straightforward to prove that halting
cannot be achieved with $o(t^2)$ communication cost in asynchronous
runs, and thus our protocol is optimal in this aspect.

\begin{figure}
  \centering
  \begin{subfigure}[t]{0.45\textwidth}
    \centering
    \includegraphics[height=1in]{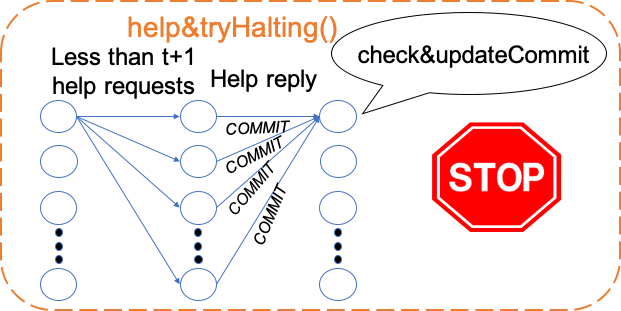}
    \caption{A few $\textsc{helpRequest}$ messages -- help and
    halt.} 
  \end{subfigure}%
   ~
  \begin{subfigure}[t]{0.45\textwidth}
    \centering
    \includegraphics[height=1in]{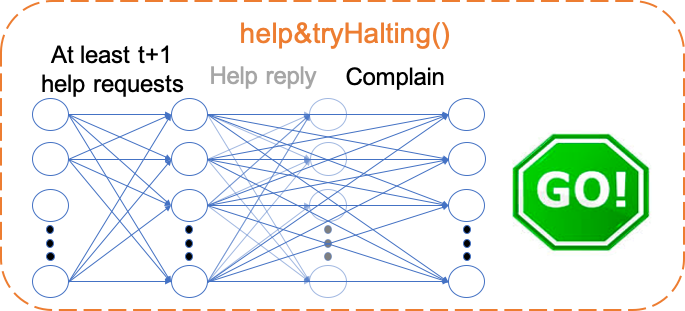}
    \caption{Too much $\textsc{helpRequest}$ messages -- the run
    is asynchronous, move to the fallback part.}
  \end{subfigure}
  \caption{An illustration of the \emph{help\&tryHalting}
  procedure.}
  \label{fig:halting}
\end{figure}

\textbf{Round complexity.}
Since in synchronous runs all parties decide at the end of an LBV
instances with an honest leader, we get that the round complexity in
synchronous runs is $O(f+1)$. Since in asynchronous runs parties may go
though $n$ LBV instances without deciding before starting the fallback,
we get that the round complexity in asynchronous runs is $O(n+1)$ in
expectations.

\textbf{Communuication complexity.}
The synchronous (optimistic) part guarantees that if the run is
indeed synchronous, then all honest parties decide before invoking
\emph{help\&tryHalting}.
The \emph{help\&tryHalting} procedure guarantees that parties continue to
the fallback part only if $t+1$ parties send an
$\textsc{helpRequest}$ message, which implies that they move only if
at least one honest party has not decided in the synchronous part.
Therefore, together they guarantee that honest parties
never move to the fallback part in synchronous runs.

The communication complexity of the synchronous part is $O(ft+t)$, so
to show that the total communication cost of the protocol
in synchronous runs is $O(ft+t)$ we need to show that the cost of
\emph{help\&tryHalting} is $O(ft+t)$ as well.
Since in synchronous runs all honest parties decide in the synchronous part, they do not send $\textsc{helpRequest}$ messages,
and thus no party can send a valid $\textsc{complain}$ message. 
Each Byzantine party that does send $\textsc{helpRequest}$ messages
can cause honest parties to send $O(t)$ replies, which implies a
total communication cost of $O(ft)$ in synchronous runs.

As for all other runs, Theorem~\ref{theorem:ES} 
states that deterministic protocols have an unbounded
communication cost in the worst case.
Thanks to the randomized fallback, our protocol has a communication
cost of $O(t^2)$ in expectation.

 \section{Discussion and Future Directions}
\label{sec:discussion}

In this paper, we propose a new approach to design agreement algorithms
for communication efficient SMR systems.
Instead of designing deterministic protocols for the eventually
synchronous model, which we prove cannot guarantee bounded
communication cost before GST, we propose to design protocols that are
optimized for the synchronous case but also have a randomized fallback
to deal with asynchrony.
Traditionally, most SMR solutions avoid randomized
asynchronous protocols due to their high communication cost.
We, in contrast, argue that this communication cost is reasonable
given that the alternative is an unbounded communication cost
during the wait for eventual synchrony.

We present the first authenticated optimistic protocol with $O(ft+t)$
communication complexity in synchronous runs and $O(t^2)$, in
expectation, in non-synchronous runs.
To strengthen our result, we prove that no deterministic protocol (even
if equipped with perfect cryptographic schemes) can do better in
synchronous runs.
As for the asynchronous runs, the lower bound in\cite{VABA} proves
that $O(t^2)$ is optimal in the worst case of $f=t$.\\

\textbf{Future work.}
Note that our synchronous protocol
satisfies early decision but not early stopping.
That is, all honest parties decide after $O(f)$ rounds, but they
terminate after $O(t)$. 
Therefore, a natural question to ask is whether exist an early stooping
synchronous Byzantine agreement protocol with an optimal adaptive
communication cost.
In addition, it may be possible to improve our protocol's complexity
even further. In particular, the lower bound on communication cost in
synchronous runs applies only to deterministic algorithms, so it might
be possible to circumvent it via randomization~\cite{braud2013fast}.

Another interesting future direction is the question of optimal
resilience in synchronous networks.
Due to the lower bound in~\cite{blum2019synchronous}, the resilience of
our protocol is optimal since the resilience in synchronous runs cannot
be improved as long as the resilience in asynchronous runs is the optimal
$t< n/3$.
However, if we consider synchronous networks in which we do not need to
worry about asynchronous runs, we know that we can tolerate up to $t<
n/2$ failures.
The open question is therefore the following: is there a synchronous Byzantine agreement protocol that tolerates up to $t<
n/2$ failures with an optimal communication complexity of $O(ft+t)$?



\bibliography{bibliography}

\newpage
\appendix

\section{A Correctness Proof of our optimistic byzantine agreement
protocol}
\label{app:upper}

In this section we prove the correctness of our optimistic byzantine
agreement protocol from Section~\ref{sec:algorithms}.

\subsection{Safety.}
We start by proving that our sequential composition of the linear LBV
building block is safe.
In a sequential composition parties sequentially invoke LBV instances 
s.t. for every LBV in the sequence they first invoke
$\mathsf{startView}$ and use the local state variables, then, at some
point they invoke $\mathsf{wedgeView}$ and update the local state with
the returned values and move to the next LBV.

\begin{lemma}
\label{lem:lockonj}

If a party $p$ gets a commit certificate for value $v$ from an LBV
instance with sequence number $sq$, then the local {LOCK} variable
of at least $n-2t$ honest parties is at least $sq$ when they start
the LBV instance with sequence number $sq+1$.

\end{lemma}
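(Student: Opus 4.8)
The plan is to follow the commit certificate backwards through the LBV message flow down to the local \textbf{lockProof} variables of a large set of honest parties, and then forward through the sequential composition up to their \emph{LOCK} variables. Concretely, suppose $p$ gets a valid commit certificate $\nu_c$ for $v$ in the LBV instance with sequence number $sq$. By the perfect threshold‑signature assumption, $\nu_c$ is the output of \emph{threshold-sign} applied to $n-t$ valid \textsc{commitShare} messages from $n-t$ \emph{distinct} parties (this is exactly what the leader collects in the ``\textbf{wait} until $|S_{\emph{commit}}| = n-t$'' step of Algorithm~\ref{alg:LBV-API}; whether the assembler is the honest leader or a byzantine one is irrelevant, the crypto forces this). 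At most $t$ of those parties are byzantine, so at least $n-2t$ are honest; call this set $H$. Each honest $q\in H$ produced its \textsc{commitShare} only inside the \textsc{lockStep} handler of LBV $sq$ (Algorithm~\ref{alg:LBV-messages}), which first assigns $\textbf{lockProof}\gets\langle v,\nu_l\rangle\neq\bot$ and only afterwards sends the share; and it ran that handler while \emph{active} in LBV $sq$, hence before invoking \emph{wedgeView} on it, and \textbf{lockProof} is never later overwritten with $\bot$.

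Next I would push this forward through the sequential composition. When $q$'s copy of LBV $sq$ returns (which, by the definition of the composition used in the excerpt, happens via $q$'s own \emph{wedgeView} invocation), it returns a triple whose \textbf{lockProof} component is non‑$\bot$ by the previous paragraph; $q$ then calls \emph{updateState}$(sq,\dots)$, and the branch guarded by $\textbf{lockProof}\neq\bot$ in Algorithm~\ref{alg:auxiliary} executes $\emph{LOCK}\gets sq$. Since the LBVs along the composition have strictly increasing sequence numbers and \emph{LOCK} is only ever set to the sequence number of the LBV currently being processed, nothing between this assignment and $q$'s \emph{startView} call for LBV $sq+1$ can lower \emph{LOCK} — the only code executed in between is the inter‑LBV state exchange, which touches \emph{KEY}, \emph{VALUE}, \emph{COMMIT} via \emph{check\&updateKey}/\emph{check\&updateCommit} but never \emph{LOCK}. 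Hence each of the $\ge n-2t$ parties in $H$ has $\emph{LOCK}\ge sq$ at the moment it starts LBV $sq+1$, which is the claim.

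The step I expect to need the most care is the timing argument that the \textbf{lockProof} value written by $q\in H$ is actually the one seen by \emph{updateState} for LBV $sq$ — i.e.\ ruling out that $q$'s copy of LBV $sq$ returns (because $q$ received the instance leader's \textsc{commit} message and set \emph{done}) \emph{before} $q$ processes the \textsc{lockStep} message. In this composition the return always goes through \emph{wedgeView}, which I argued runs strictly after the \emph{active}-guarded \textsc{lockStep} handler, so the issue does not arise; but even without that one can close it via the happens‑before chain ``$q$ assigns \textbf{lockProof} $\to$ $q$ sends \textsc{commitShare} $\to$ the assembler collects $q$'s share into $\nu_c$ $\to$ \textsc{commit} with $\nu_c$ is sent $\to$ any party receives that \textsc{commit}'', which forces $q$ to set \textbf{lockProof} no later than it sets \emph{done}. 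So whichever way $q$'s instance of LBV $sq$ returns, its \textbf{lockProof} is already non‑$\bot$, and the argument of the second paragraph goes through.
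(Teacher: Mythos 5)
Your proof is correct and follows essentially the same route as the paper's: a commit certificate requires $n-t$ \textsc{commitShare}s, hence at least $n-2t$ honest contributors, each of whom sets \textbf{lockProof} in the \textsc{lockStep} handler before sending its share and therefore sets $\emph{LOCK}\gets sq$ in \emph{updateState} after wedging and before starting LBV $sq+1$. The extra care you take about the two return paths of the LBV and about \emph{LOCK} never being lowered is a more explicit rendering of what the paper's terse proof leaves implicit, not a different argument.
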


\begin{proof}

To generate a commit certificate, a party needs $n-t$
$\textsc{commitShare}$ signatures.
In addition, honest parties wedge an LBV instance before moving to the
next one in sequential compositions.
Thus, at least $n-2t$ honest parties sent their $\textsc{commitShare}$
in the LBV with $sq$ before starting the LBV with $sq+1$. 
The lemma follows since parties set their $\textbf{lockProof}$ 
before sending their $\textsc{commitShare}$ and update 
their \emph{LOCK} variable accordingly before invoking the next LBV.

\end{proof}

\noindent The next corollary follows from Lemma~\ref{lem:lockonj} and
the fact that the local \emph{LOCK} variables are never decreased in
our sequential compositions.

\begin{corollary}
\label{col:lockj}

If a party $p$ gets a commit certificate for value $v$ from an LBV
instance with sequence number $sq$, then there at least $n-2t$ honest
parties which {LOCK} variable is at least $sq$
when they start any LBV with sequence number $sq' > sq$.

\end{corollary}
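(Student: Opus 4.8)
The plan is to combine Lemma~\ref{lem:lockonj} with a simple monotonicity observation about the \emph{LOCK} variable. First I would recall that Lemma~\ref{lem:lockonj} already establishes the statement for the single sequence number $sq' = sq+1$: there is a set $H$ of at least $n-2t$ honest parties whose \emph{LOCK} variable is at least $sq$ at the moment each of them starts the LBV instance with sequence number $sq+1$. It therefore suffices to argue that, once an honest party's \emph{LOCK} has reached a value $\geq sq$, it stays $\geq sq$ throughout the remainder of the run, and that every LBV instance with sequence number $sq' > sq$ that such a party starts is started \emph{after} it started the instance with sequence number $sq+1$.

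For the monotonicity claim I would inspect every place where \emph{LOCK} is written: the only assignment is in \textsc{updateState} (Algorithm~\ref{alg:auxiliary}), which performs $\emph{LOCK} \gets sq$ exactly when the just-wedged LBV instance with sequence number $sq$ returned a non-$\bot$ \textbf{lockProof}, and is invoked only after that instance has been wedged. Since in our sequential compositions the instances are invoked in increasing order of sequence number, every later write to \emph{LOCK} uses a strictly larger sequence number. Hence the sequence of values taken by \emph{LOCK} at any honest party is non-decreasing over time, which makes precise the informal claim that \emph{LOCK} is ``never decreased''.

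Putting the two pieces together: fix any honest party $q \in H$ and any LBV instance with sequence number $sq' > sq$ that $q$ starts. By the sequential structure $q$ starts the instance with $sq+1$ no later than it starts the instance with $sq'$, and at the former time $q.\emph{LOCK} \geq sq$ by Lemma~\ref{lem:lockonj}. By monotonicity, $q.\emph{LOCK} \geq sq$ also holds at the (later) time $q$ starts the instance with $sq'$. Since $|H| \geq n-2t$, this yields the corollary (the statement being vacuous for those members of $H$ that never reach the instance with sequence number $sq'$).

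I do not expect a genuine obstacle here; the only point requiring a little care is making ``never decreased'' rigorous, i.e., verifying that \emph{LOCK} is modified solely inside \textsc{updateState}, solely with the sequence number of the instance it is called for, and that sequential compositions invoke instances in increasing sequence-number order so those write values are monotone — after which the argument is a one-line consequence of Lemma~\ref{lem:lockonj}.
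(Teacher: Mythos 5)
Your proposal is correct and is exactly the paper's intended argument: the paper proves the corollary by citing Lemma~\ref{lem:lockonj} for the case $sq'=sq+1$ together with the observation that the local \emph{LOCK} variables are never decreased in a sequential composition. You simply make the monotonicity observation explicit by inspecting \textsc{updateState}, which is a reasonable (and welcome) elaboration but not a different route.
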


\begin{lemma}
\label{lem:agreement1}

It is impossible to generate two commit certificates for different
values from the same LBV instance.

\end{lemma}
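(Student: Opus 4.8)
The plan is to exploit the structure of a single linear LBV instance, specifically the use of threshold signatures with threshold $n-t$ at each step, together with the standard quorum-intersection argument. Suppose for contradiction that two valid commit certificates $\nu_c$ and $\nu_c'$ are generated in the LBV instance parametrized with the same $sq$ and $leader$, on distinct values $v \neq v'$. By definition of the threshold signature scheme, $\nu_c$ is a valid threshold signature on $\lr{\textsc{lockStep}, sq, leader, v}$, which means at least $n-t$ parties produced a $\textsc{commitShare}$ on that message; similarly at least $n-t$ parties produced a $\textsc{commitShare}$ on $\lr{\textsc{lockStep}, sq, leader, v'}$. Since $n > 3t$, any two sets of size $n-t$ intersect in at least $n-2t \geq t+1$ parties, hence in at least one honest party $p$.

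First I would trace what that honest party $p$ must have done. To produce a $\textsc{commitShare}$ on value $v$ for this LBV, $p$ must have received a valid $\textsc{lockStep}$ message, i.e., a valid threshold signature $\nu_l$ on $\lr{\textsc{KeyStep}, sq, leader, v}$; and to produce a $\textsc{commitShare}$ on $v'$ it must have received a valid threshold signature $\nu_l'$ on $\lr{\textsc{KeyStep}, sq, leader, v'}$. Each of these $\textsc{lockProof}$ threshold signatures in turn requires $n-t$ $\textsc{lockShare}$ signatures, and so on down the chain. The cleanest way to finish is to observe that for the \emph{first} step the leader is honest-agnostic only in that it picks a single \emph{VALUE}, but the real leverage is simpler: since each step's threshold signature certifies the leader's \emph{VALUE} for \emph{that specific step name}, and the honest party $p$ share-signs for a given step at most on the value it validated, I would push the intersection argument down to whichever step is convenient. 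Actually the simplest is: two valid $\textsc{commitShare}$ sets of size $n-t$ each, on $v$ and on $v'$ respectively, intersect in an honest party $p$; but an honest party sends \emph{at most one} $\textsc{commitShare}$ in a given LBV instance (it sends it only once, upon receiving a valid $\textsc{lockStep}$ message, and after that its \textbf{lockProof} is set — and more to the point, the message handler fires once, share-signing the single value it saw). Hence $p$ cannot have share-signed both $v$ and $v'$, a contradiction.

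So the key steps, in order, are: (1) unfold the definition of a valid commit certificate into "$n-t$ parties sent a $\textsc{commitShare}$ on $\lr{\textsc{lockStep}, sq, leader, \cdot}$"; (2) apply quorum intersection ($n > 3t \Rightarrow$ any two $(n-t)$-sets share an honest party); (3) argue that a single honest party sends at most one $\textsc{commitShare}$ per LBV instance (it replies once to the leader's $\textsc{lockStep}$ message, since these handlers process a message from the leader and the leader is a fixed party for this instance, so there is a single such message the honest party acts on — or, more carefully, even if a byzantine leader sends two different $\textsc{lockStep}$ messages, the honest party's \textbf{lockProof} gets overwritten but it still only emits shares consistent with messages it validated, and we only need that the two global $(n-t)$-quorums cannot both be met if the honest party is counted in both with different values); (4) conclude $v = v'$, contradiction.

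The main obstacle I anticipate is step (3): making precise that an honest party does not contribute a $\textsc{commitShare}$ to \emph{both} a $v$-quorum and a $v'$-quorum. The pseudocode handler for $\textsc{lockStep}$ does not have an explicit "first time" guard (unlike the share-collection handlers), so a byzantine leader could in principle send two different $\textsc{lockStep}$ messages to the same honest party and get two different $\textsc{commitShare}$s. The right fix — and I expect this is what the authors do — is to push the contradiction back one step: a valid $\textsc{lockProof}$ on value $v$ requires an $(n-t)$-quorum of $\textsc{lockShare}$s on $\lr{\textsc{KeyStep}, sq, leader, v}$, and honest parties \emph{do} guard $\textsc{KeyStep}$ processing... but that handler also lacks a first-time guard. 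Ultimately the cleanest invariant is that the leader is a \emph{single fixed party} $p_l$ and in an honest-sender model (authenticated links) only what $p_l$ actually sends matters; if $p_l$ is byzantine it can equivocate, so the real argument must be: to get a commit certificate on $v$, at least $t+1$ honest parties got a valid \textbf{lockProof} on $v$ (commit causality), and to get one on $v'$, at least $t+1$ honest parties got a valid \textbf{lockProof} on $v'$, and then descend via lock causality to \textbf{keyProof}s, and finally to the first step where the honest quorum must have validated the leader's first-step \emph{VALUE} — which is a single value because $n-t$ first-step shares ($S_{key}$) are all computed against the leader's one \emph{VALUE} field, and any honest party in the intersection of the two $(n-t)$ first-step quorums share-signed a single value. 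I would organize the write-up around this descent, so that the irreducible use of quorum intersection lands at the first step where the honest party genuinely cannot have seen two values.
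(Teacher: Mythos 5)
Your steps (1), (2) and (4) are exactly the paper's proof: a commit certificate on $v$ requires $n-t$ $\textsc{commitShare}$s on $\lr{\textsc{lockStep}, sq, leader, v}$, two such $(n-t)$-quorums for $v \neq v'$ intersect in an honest party, and that party would have to send contradicting $\textsc{commitShare}$s, ``which is impossible by the code.'' So the core of your proposal is the paper's argument verbatim, and it is correct under the intended reading of the pseudocode.

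Your worry in step (3) is a legitimate observation -- the $\textsc{lockStep}$ handler indeed lacks an explicit ``for the first time'' guard -- but the descent you propose does not repair it: the $\textsc{preKeyStep}$ handler has exactly the same omission, so at the first step an equivocating byzantine leader could likewise extract $\textsc{keyShare}$s on two values from the same honest party, and your claim that the first step forces a single value only holds when the leader honestly maintains a single $S_{\emph{key}}$ against its one \emph{VALUE} field, which a byzantine leader need not do. The lemma (and the paper's one-line justification) rests on the standard implicit convention that an honest party processes each leader step message at most once per LBV instance and hence emits at most one share per step; once you state that convention, your quorum-intersection argument at the $\textsc{commitShare}$ level closes immediately and the descent through $\textsc{lockShare}$ and $\textsc{keyShare}$ is unnecessary. (That descent is essentially what the paper does later, in the base case of Lemma~\ref{lem:keyinduction}, for a different purpose.)
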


\begin{proof}

In every LBV instance, to generate a commit certificate for a value
$v$ at least $n-t$ parties need to send a valid $\textsc{commitShare}$
for value $v$.
Therefore, to generate generate two commit certificates for different
values from the same LBV instance we need at least $1$ honest party
need to send two contradicting $\textsc{commitShare}$ message, which
is impossible by the code.

\end{proof}

The next lemma shows that after some party generates a commit
certificate for a value $v$ it is  impossible to generate a valid key
with different value.

\begin{lemma}
\label{lem:keyinduction}

Assume some honest party $p$ gets a commit certificate for value $v$
from an LBV instance with sequence number $sq$.
Than no party can get a valid \textbf{keyProof} on a value other than
$v$ from an LBV instance with sequence number $sq' \geq sq$ in a
sequential composition.

\end{lemma}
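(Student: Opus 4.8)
The plan is to prove Lemma~\ref{lem:keyinduction} by induction on the sequence number $sq' \geq sq$, leveraging the causality and safety properties of the linear LBV together with the locking argument already established in Lemma~\ref{lem:lockonj} and Corollary~\ref{col:lockj}. The base case $sq' = sq$ is exactly the Safety property of a single linear LBV instance: all valid \textbf{keyProof}, \textbf{lockProof}, and commit certificates obtained in the LBV with sequence number $sq$ carry the same value, so since $p$ obtained a commit certificate for $v$, no valid \textbf{keyProof} for a value $\neq v$ can exist in that instance.

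For the inductive step, assume the claim holds for all LBV instances with sequence numbers in $\{sq, sq+1, \dots, sq'-1\}$, and consider the LBV instance with sequence number $sq'$. First I would observe that, by Corollary~\ref{col:lockj}, at least $n-2t \geq t+1$ honest parties start the LBV with sequence number $sq'$ with \emph{LOCK} $\geq sq$. Now, to obtain a valid \textbf{keyProof} in the LBV with sequence number $sq'$, the leader must collect $n-t$ valid \textsc{keyShare} signatures on its first-step (\textsc{preKeyStep}) message; since there are at most $t$ byzantine parties, at least one of the signing parties is an honest party $q$ with \emph{LOCK}$_q \geq sq$. By the code of the \textsc{preKeyStep} handler, $q$ signs only if the leader's attached \emph{KEY} has $\emph{key.sq} \geq \emph{LOCK}_q \geq sq$ and \emph{threshold-validate} accepts \emph{key.proof} as a valid first-step threshold signature (a valid \textbf{keyProof}) from the LBV with sequence number $\emph{key.sq}$ on the leader's value. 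Hence a valid \textbf{keyProof} from some earlier LBV with sequence number $sq \leq \emph{key.sq} < sq'$ must exist and must carry the leader's proposed value; if $\emph{key.sq} = sq'$ we instead fall back to the Safety property of the LBV with sequence number $sq'$ itself after unrolling, but the cleaner route is to note $\emph{key.sq}$ is strictly the sequence number of a \emph{completed} earlier instance, so $\emph{key.sq} \in \{sq,\dots,sq'-1\}$ once we account for the sequential-composition ordering (keys are returned only from LBVs that have been wedged, i.e., strictly earlier ones).

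By the induction hypothesis applied to the LBV with sequence number $\emph{key.sq} \geq sq$, this valid \textbf{keyProof} can only be on the value $v$. Therefore the leader's proposed value in the LBV with sequence number $sq'$ is $v$, and so by the Safety property of that LBV instance every valid \textbf{keyProof} obtained in it is also on $v$, completing the induction. The main obstacle I anticipate is handling the boundary bookkeeping precisely: one must argue that an honest party locked at $\geq sq$ genuinely participates in \emph{every} first-step quorum at sequence numbers $> sq$ (this is where $n - 2t \geq t+1$, equivalently $t < n/3$, is essential), and one must be careful that the ``key is up-to-date'' check uses $\geq \emph{LOCK}$ rather than $>$, and that \emph{LOCK} values are monotone across the sequential composition — both already recorded in Lemma~\ref{lem:lockonj} and its corollary, so this should go through cleanly.
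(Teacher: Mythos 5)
Your proposal is correct and takes essentially the same route as the paper's proof: strong induction on the sequence number, with the base case resting on single-LBV safety and the inductive step using the quorum intersection between the $n-t$ \textsc{keyShare} signers and the $n-2t$ honest parties locked at $\geq sq$ (Lemma~\ref{lem:lockonj} and Corollary~\ref{col:lockj}), which forces the leader's attached key to carry $v$ by the induction hypothesis. If anything, you are slightly more careful than the paper about the boundary case of where the attached key's sequence number can lie.
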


\begin{proof}

We prove by induction on LBVs' sequence numbers.

\textbf{Base:} Sequence number $sq$.
To generate a commit certificate for value $v$ at least $n-2t$ honest
parties need to generate a $\textsc{commitShare}$ signature.
An honest party generates a $\textsc{commitShare}$ signature only if
it gets a valid \textsc{lockStep} message with value $v$, which in turn
requires at least $n-2t$ honest parties to generate $\textsc{lockShare}$
signatures on $v$
An honest party generates a $\textsc{lockShare}$ signature on $v$ only
of it gets a valid \textsc{keyStep} message with value $v$, which in
turn requires at least $n-t$ parties to generate
$\textsc{keyShare}$ signatures on $v$.
Thus, if some party gets a commit certificate for value $v$
from an LBV instance with sequence number $sq$, then $n-t$ parties
previously generated $\textsc{keyShare}$ signatures on $v$ in this
LBV instance.
Moreover, since honest parties never generate  $\textsc{keyShare}$
signatures on different values, we get that it is impossible to
generate two valid \textsc{keyStep} messages with different values.
The lemma follows. 

\textbf{step:} Assume the lemma holds for all LBVs with sequence
number $sq''$, $sq \leq sq'' \leq sq'$, we now show that it holds for
$sq' + 1$ as well.
Assume by a way of contradiction that some party
gets a valid \textbf{keyProof} with value $v' \neq v$ from the LBV
with sequence number $sq+1$.
Thus, at least $n-t$ parties generated $\textsc{keyShare}$
signatures on $v'$ in the LBV with $sq+1$.
By Lemma~\ref{lem:lockonj}, there are at least $n-2t$ honest parties
whose local $LOCK \geq sq$ in the LBV with $sq+1$.
Thus, since $n \geq 3t+1$, we get that at least $1$ honest party $p$
whose local $LOCK \geq sq$ generated $\textsc{keyShare}$
signatures on $v'$ in the LBV with $sq+1$.
Therefore, $p$ gets a valid \emph{KEY} for $v'$ with a sequence
number $sq'' \geq sq$.
A contradiction to the inductive assumption.

\end{proof}

\begin{lemma}

Our optimistic byzantine agreement protocol, which is given in
Algorithms~\ref{alg:state}, \ref{alg:LBV-API},
\ref{alg:LBV-messages}, \ref{alg:synch}, \ref{alg:synchauxiliary},
\ref{alg:asynch}, \ref{alg:barrierAndLE}, \ref{alg:optimistic} and
\ref{alg:halting}, satisfies the Agreement property.

\end{lemma}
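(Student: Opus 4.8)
The plan is to establish the Agreement property by reducing the full protocol to a single sequential composition of linear LBV instances and then invoking the safety lemmas already developed (Lemmas~\ref{lem:lockonj}, \ref{lem:agreement1}, \ref{lem:keyinduction} and Corollary~\ref{col:lockj}). First I would observe that an honest party decides only inside \textsf{updateState} (or \textsf{check\&updateCommit}), and in every such case it does so because it holds a valid commit certificate for the value it decides on. Hence it suffices to show that every valid commit certificate ever obtained in any LBV instance of the execution carries the same value. The key structural fact I would extract is that, if we take the execution of the whole optimistic protocol and delete from it every LBV instance of the asynchronous fallback that was \emph{not} elected by the \textsf{elect} primitive in its wave, then what remains is exactly a sequential composition of LBV instances in the sense defined at the start of Appendix~\ref{app:upper}: the synchronous part runs $n$ pre-defined instances with monotonically increasing sequence numbers, each wedged before the next is started; \textsf{help\&tryHalting} and the state-exchange phases only update the local \emph{KEY}, \emph{LOCK}, \emph{COMMIT} variables monotonically (via \textsf{check\&updateKey} and \textsf{check\&updateCommit}) and send no LBV messages; and in the fallback, between waves parties wedge all non-elected instances, so from the point of view of the local state only the elected instance in each wave matters, and its sequence number again strictly exceeds all previously used ones. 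I would phrase this as a short lemma: ``the elected LBV instances, ordered by sequence number, together with the synchronous ones, form a sequential composition.''

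Given that structural lemma, the argument is the induction already packaged in Lemma~\ref{lem:keyinduction}. Concretely, suppose two honest parties decide, on values $v$ and $v'$, by virtue of commit certificates obtained in LBV instances with sequence numbers $sq \le sq'$. If $sq = sq'$, Lemma~\ref{lem:agreement1} gives $v = v'$ directly, since two commit certificates for different values cannot come from the same instance. If $sq < sq'$, then by Lemma~\ref{lem:keyinduction} no party can obtain a valid \textbf{keyProof} for any value other than $v$ in an instance with sequence number $\ge sq$; in particular the leader of the instance with sequence number $sq'$ cannot assemble a valid first-step (\textsc{preKeyStep}) message — needed to collect $n-t$ key shares and hence ultimately a commit certificate — for any value other than $v$, so the commit certificate at $sq'$ is for $v$, i.e.\ $v' = v$. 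I would make sure to note that this uses the commit-causality and lock-causality properties of a single LBV instance (so that a commit certificate forces $n-2t$ honest parties to have locked on $sq$) together with the fact that our sequential composition never decreases \emph{LOCK}, which is exactly Corollary~\ref{col:lockj}; these are the ingredients that make the inductive step of Lemma~\ref{lem:keyinduction} go through even across the synchronous/asynchronous boundary, because the boundary is invisible once we restrict to the composed instances.

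The main obstacle, and the step I would spend the most care on, is justifying that the \emph{non-elected} fallback LBV instances and the various non-LBV message exchanges (\textsf{help\&tryHalting}, \textsc{keyRequest}/\textsc{keyReply}, \textsc{exchange}) genuinely do not create a ``side channel'' that violates the sequential-composition hypotheses — in particular that no party can ever install into its \emph{KEY} variable a \textbf{keyProof} coming from a non-elected instance with a competing value, and that \emph{LOCK} is only ever advanced, never reset. For this I would check that \textsf{check\&updateKey} only accepts a key whose threshold signature validates against $\emph{LEADER}[\cdot]$ for the recorded sequence number and only if its sequence number strictly exceeds the current one, so a \textbf{keyProof} from any instance — elected or not — is still governed by Lemma~\ref{lem:keyinduction}'s conclusion (that lemma is about \emph{all} parties getting \textbf{keyProof}s, not only elected instances), and that \emph{LOCK} is updated solely in \textsf{updateState} by assignment to a sequence number that is at least as large as before along the composition. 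Once those monotonicity and provenance checks are in place, the proof is just the two-case comparison above, and I would close by remarking that since every deciding honest party decides the value of some commit certificate and all commit certificates agree, the Agreement property holds in all runs regardless of synchrony.
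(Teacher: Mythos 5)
Your proposal is correct and follows essentially the same route as the paper: reduce the full protocol to a sequential composition of LBV instances (synchronous ones plus the elected fallback ones), then combine Lemma~\ref{lem:agreement1} for the same-instance case with Lemma~\ref{lem:keyinduction} and Corollary~\ref{col:lockj} for the cross-instance case. You are somewhat more explicit than the paper about justifying the reduction itself (discarding non-elected instances and checking that \textsf{check\&updateKey}/\textsf{check\&updateCommit} preserve monotonicity of \emph{LOCK} and provenance of \emph{KEY}), which the paper treats as given, but the argument is the same.
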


\begin{proof}

Our protocol sequential composes LBV instances and decides only on
values with a commit certificate.
So we need to show that it is impossible to generate two commit
certificates for different values in a sequential composition of LBV
instances.
Let party $p$ be the first to generate a commit certificate for some
value $v$ and let $sq$ be the sequence number of the LBV instance in
which it was generated.
By lemma~\ref{lem:agreement1}, it is impossible to generate a commit
certificate for a value other then $v$ in the LBV with sequence number
$sq$.
By Lemma~\ref{lem:keyinduction}, no party can get a valid
\textbf{keyProof} on a value other than $v$ from an LBV instance with
sequence number $sq' \geq sq$.
By Lemma~\ref{lem:agreement1}, the local {LOCK} variable
of at least $n-2t$ honest parties is at least $sq$ in any LBV after
the one with sequence number $sq$. 
Therefore, the lemma follows from the fact that at least $n-t$
parties need to contribute signatures in order to generate a
commit certificate and since an honest party whose $LOCK \geq sq$ 
will not generate a $\textsc{keyShare}$ signature on a value $v'$
without getting a valid \emph{KEY} for $v'$ from an LBV with sequence
number $sq' \geq sq$.  

\end{proof}

\subsection{Liveness.}
We now prove that our protocol satisfies termination in all synchronous
and eventually synchronous runs and provide probabilistic termination
in all asynchronous runs.

\begin{lemma}
\label{lem:validKey}

Consider an LBV instance $lbv$ in a sequential composition. 
If some honest party $p$ is locked on a sequence number $sq$ (its
$LOCK = sq$) before starting $lbv$, then at least $n-2t$ honest
parties set their local \emph{KEY} variable with a valid key and
sequence number $sq$ immediately after wedging the LBV instance with
$sq$.

\end{lemma}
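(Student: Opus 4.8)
The claim closely parallels Lemma~\ref{lem:lockonj}, but runs in the opposite direction: there we had a \emph{commit} certificate forcing $n-2t$ honest parties to \emph{lock}; here we have one honest party \emph{locked} on $sq$ and must deduce that $n-2t$ honest parties hold a valid \emph{key} for $sq$. The plan is to trace the causal chain backwards through the three-step structure of the LBV instance with sequence number $sq$, using the causality guarantees already stated for a single linear LBV (Lock causality and Commit causality) together with the fact that in a sequential composition honest parties wedge an LBV before moving on.

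First I would unfold what it means for an honest party $p$ to have $\emph{LOCK} = sq$ before starting $lbv$. Looking at \textsc{updateState}, $p$ set $\emph{LOCK} \gets sq$ only because its \textbf{lockProof} was non-$\bot$ when it wedged the LBV with sequence number $sq$, i.e.\ $p$ received a valid \textsc{lockStep} message in that instance. By the Lock-causality property of the linear LBV, this implies at least $t+1$ honest parties previously obtained a valid \textbf{keyProof} in the LBV with $sq$ — but I actually need $n-2t$, not $t+1$, so Lock-causality alone is not quite enough and I have to dig one step deeper. A valid \textsc{lockStep} message carries a threshold signature $\nu_l$ on $\lr{\textsc{KeyStep}, sq, \emph{leader}, v}$, which by the threshold-signature assumption requires $n-t$ distinct valid \textsc{lockShare} signatures; an honest party emits a \textsc{lockShare} on $v$ only after setting its own \textbf{keyProof} $\gets \lr{v,\nu_k}$ upon receiving a valid \textsc{KeyStep} message. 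Since $n-t$ shares include at least $n-2t$ honest ones, at least $n-2t$ honest parties had \textbf{keyProof} $\neq \bot$ for value $v$ in the LBV with $sq$. Then, because this is a sequential composition, each such honest party invoked \textsc{wedgeView}$(sq,\cdot)$ — returning that \textbf{keyProof} — and immediately ran \textsc{updateState}, which (since \textbf{keyProof} $\neq \bot$) sets $\emph{KEY} \gets \lr{sq, \textbf{keyProof.proof}}$. That is exactly the conclusion: $n-2t$ honest parties hold a valid key with sequence number $sq$ right after wedging.

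The main obstacle I anticipate is the gap between the $t+1$ bound handed to us by the stated Lock-causality property and the $n-2t$ bound the lemma demands; I have to bypass the black-box causality statement and argue directly from the $n-t$ threshold on \textsc{lockShare} signatures, which is where the $n-2t$ honest parties come from (using $n \ge 3t+1$ only implicitly, via the standard ``$n-t$ signers include $n-2t$ honest ones'' counting). A secondary point needing care is \emph{timing}: I must make sure the $n-2t$ honest parties in question all actually wedge the LBV with $sq$ (so that \textsc{updateState} runs and writes $\emph{KEY}$) before this bound is useful downstream — this holds because a sequential composition always interposes a \textsc{wedgeView} and a state update between consecutive LBVs, and an honest party that sent a \textsc{lockShare} in instance $sq$ either already had its \textbf{keyProof} set or will have it set at the latest by the time it wedges, since wedging only freezes and returns the current \textbf{keyProof}. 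Finally I would note that ``valid key'' here means precisely that \textbf{keyProof.proof} is a threshold signature on $\lr{\textsc{preKeyStep}, sq, \emph{LEADER}[sq], v}$ generated at the end of the first step, which is guaranteed because an honest party sets \textbf{keyProof} only after \emph{threshold-validate} succeeds on that message in the \textsc{KeyStep} handler.
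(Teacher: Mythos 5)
Your argument is correct and follows essentially the same route as the paper's proof: you count the $n-t$ \textsc{lockShare} signatures needed for the valid \textsc{lockStep} that caused $p$ to lock, observe that the $n-2t$ honest signers had already set \textbf{keyProof} in the \textsc{KeyStep} handler before sharing, and conclude they write $\emph{KEY}\gets\lr{sq,\cdot}$ upon wedging. Your explicit remark that the black-box Lock-causality property only yields $t+1$ and must be bypassed by counting shares directly is exactly the (implicit) move the paper makes.
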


\begin{proof}

Since $p$ is locked on $sq$, then it got a valid $\textsc{lockStep}$
message in the LBV instance with sequence number $sq$.
To generate a valid $\textsc{lockStep}$, a party needs $n-t$
$\textsc{lockShare}$ signatures.
Thus, since honest parties first wedge an LBV instance
and then update their local state with the returned values, we get
that at least $n-2t$ honest parties generated a $\textsc{lockShare}$
signature before updating their local \emph{KEY} variable.
Thus, at least $n-2t$ honest parties got a valid $\textbf{keyProof}$
before wedging and thus update their local \emph{KEY} variable
accordingly immediately after wedging. 

\end{proof}

\begin{lemma}
\label{lem:conditionalProgress}

Consider a synchronous run of our protocol, and consider an LBV
instance $lbv$ in the synchronous part, which parties invoke at time
$t$.
If the leader of $lbv$ is honest and it have not decided before
time $t$, then all honest parties decide at time $t+7\Delta$.

\end{lemma}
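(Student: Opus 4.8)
The plan is to trace the execution of the honest leader's LBV instance step by step, using the synchrony assumption (all messages among honest parties delivered within $\Delta$) together with the conditional-progress property of a single LBV and the key/lock-propagation lemmas already established. First I would observe that at time $t$ every honest party invokes $\mathsf{startView}(sq, \mathit{leader})$ for this LBV — this is guaranteed by the pre-defined schedule of the synchronous protocol (Algorithm~\ref{alg:synch}), where all $\mathsf{startView}$ and $\mathsf{wedgeView}$ times are fixed on the global clock and the leader's timeout $T$ (either $7\Delta$ or $9\Delta$) leaves enough room. Since the leader has not decided before time $t$ (its $\mathit{COMMIT} = \bot$), it will in fact participate: it queries for an up-to-date key, waits $2\Delta$, collects $n-t$ $\textsc{keyReply}$ messages (possible since at most $t$ parties are byzantine and the others reply within $\Delta$), and by Lemma~\ref{lem:validKey} at least $n-2t \geq t+1$ honest parties hold a valid $\mathit{KEY}$ with sequence number $\geq \mathit{LOCK}$, so the leader learns a key that passes the first-step validation of every honest party.

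Next I would march through the $3.5$ communication steps of the LBV, charging $2\Delta$ (one leader-to-all, one all-to-leader round trip) per full step: at the start of each step the leader broadcasts its $\mathit{VALUE}$ with the appropriate threshold-signature proof; within $\Delta$ every honest party receives it, validates it (the first-step proof is accepted because the leader holds an up-to-date key; the later-step proofs are accepted because they are genuine threshold signatures on the previous step's message), and returns its share; within another $\Delta$ the leader has $n-t$ valid shares and produces the next threshold signature. So after $\mathrm{preKeyStep}$, $\mathrm{KeyStep}$, $\mathrm{lockStep}$ — three full steps, $6\Delta$ total — the leader holds a commit certificate $\nu_c$, and the final half-step ($\textsc{commit}$ broadcast) delivers it to every honest party within one more $\Delta$, i.e.\ by time $t + 7\Delta$. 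Each honest party then sets $\mathbf{commitProof}$, and since the scheduled $\mathsf{wedgeView}$ has not yet fired (the timeout is $\geq 7\Delta$), $\mathit{done}$ becomes true, $\mathsf{startView}$ returns the non-$\bot$ $\mathbf{commitProof}$, and $\mathsf{updateState}$ sets $\mathit{COMMIT}$ and executes $\textbf{decide}$. I should also note the degenerate-looking case where some honest party already decided in an earlier LBV before time $t$: such a party still participates in replying to the leader (message handlers in Algorithm~\ref{alg:LBV-messages} are active until $\mathsf{wedgeView}$), so progress is not affected, and the lemma's conclusion (all honest parties decide by $t+7\Delta$) still holds — those parties decided even earlier.

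The main obstacle I anticipate is bookkeeping the timing precisely: I must confirm that the $\mathsf{wedgeView}$ for this instance is scheduled at exactly $7\Delta$ after $\mathsf{startView}$ when the timeout is $7\Delta$ (so the commit arrives just in time, at the boundary), and that in the $9\Delta$-timeout case the extra $2\Delta$ is exactly the key-query phase that precedes the $7\Delta$ of LBV steps — so the "$t$" at which parties \emph{invoke} $lbv$ in the statement is the time after the key exchange, making $t+7\Delta$ the correct deadline uniformly. I also need to double-check that the first-step validation never rejects the honest leader's key: this is where Lemma~\ref{lem:validKey} and the fact that $\mathit{LOCK}$ is monotone and that a quorum of $n-t$ replies necessarily intersects the $n-2t$ honest key-holders in at least one honest party. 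Once that intersection argument is nailed down, the rest is the routine $2\Delta$-per-step propagation count.
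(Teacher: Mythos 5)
Your proposal is correct and follows essentially the same route as the paper's proof: both use Lemma~\ref{lem:validKey} plus the monotonicity of \emph{KEY} and a quorum-intersection argument to show the honest leader learns a key that unlocks all honest parties, and then invoke synchrony to push the $3.5$ leader-to-all/all-to-leader steps through within $7\Delta$. Your version is merely more explicit about the per-step $\Delta$ accounting and the $7\Delta$ versus $9\Delta$ scheduling, which the paper's proof leaves implicit.
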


\begin{proof}

First, by Lemma~\ref{lem:validKey} and since parties overwrite their
local \emph{KEY} variables only with more up-to-date keys, we get 
that at least $n-2t$ honest parties has a \emph{KEY} variable that
unlocks all honest parties (it's sequence number is equal to or higher
than all honest parties' \emph{LOCK}) variable).
By the code, the leader query all parties for their \emph{KEY} and
waits for $n-2t$ replays.
Thus, it gets a reply from at least $1$ honest party that have a key
that unlocks all honest parties.
Therefore, the leader learn this key and thus gets all honest parties
to participates.
The lemma follows from synchrony and the fact that all honest parties
start $lbv$ at the same time and and do not wedge before all honest
parties get al messages.

\end{proof}

\begin{lemma}
\label{lem:synchDecide}

All honest parties decide in all synchronous runs of the protocol. 

\end{lemma}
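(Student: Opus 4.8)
The claim is that in every synchronous run, all honest parties decide. The synchronous protocol sequentially composes $n$ pre-defined linear LBV instances with distinct leaders $p_1,\dots,p_n$, where parties invoke \lv{j,p_j} and \wv{j,p_j} at globally coordinated times. The plan is to find one LBV instance in the composition whose leader is honest and has not yet decided when that instance is invoked, and then invoke Lemma~\ref{lem:conditionalProgress} to conclude that every honest party decides by the end of that instance.

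\medskip

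First I would argue by contradiction: suppose some honest party never decides in a synchronous run. Since there are at most $t < n/3$ byzantine parties, at least $n - t \geq 2t+1 \geq t+1$ of the leaders $p_1,\dots,p_n$ are honest; in particular, there is at least one honest leader, say $p_j$. Consider the first honest leader $p_j$ in the composition (smallest such $j$). Two cases: either $p_j$ has already decided before invoking its LBV (its $\emph{COMMIT} \neq \bot$), or it has not. If $p_j$ has not decided, Lemma~\ref{lem:conditionalProgress} applies directly to the instance with sequence number $j$: all honest parties decide at time $t + 7\Delta$, contradicting the assumption that some honest party never decides. If $p_j$ has already decided, then $p_j$ got a commit certificate in some earlier LBV instance; by the code of \emph{check\&updateCommit} / \emph{updateState} it decides $\emph{COMMIT.val}$ at that point — but this alone does not immediately say \emph{all} honest parties decided. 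So the real content is: once \emph{some} honest party gets a commit certificate, \emph{all} honest parties eventually get one (in a synchronous run).

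\medskip

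Hence the key step is to show: if any honest party obtains a commit certificate in LBV instance $sq$, then every honest party obtains a commit certificate by the end of the composition. I would establish this by noting that a commit certificate is a threshold signature that can be forwarded; but the cleaner route in this protocol is via the honest leaders downstream. If an honest party decided at some instance $sq_0$, look at the first honest leader $p_k$ with $k > sq_0$ (if $p_j$ itself was the first honest leader and it had decided, then there is a later honest leader since $n-t \geq 2$). Actually the crispest argument: let $p_k$ be the \emph{first} honest leader who has \emph{not} decided at the time it would invoke its LBV. If no such $p_k$ exists, then every honest leader has decided — but every honest party is itself a leader of exactly one instance, so in particular every honest party has decided, done. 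If such $p_k$ exists, Lemma~\ref{lem:conditionalProgress} gives that all honest parties decide by time (invocation of instance $k$) $+\,7\Delta$, again a contradiction with the standing assumption. Either way we reach a contradiction, so all honest parties decide.

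\medskip

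The main obstacle I anticipate is the bookkeeping around the \emph{conditional progress} hypothesis of Lemma~\ref{lem:conditionalProgress}: it requires that \emph{all} honest parties invoke \lv{k,p_k} and that \wv{k,p_k} is not invoked at any honest party before all inter-honest messages of that instance are delivered. In a synchronous run this follows because all invocation and wedge times are pre-defined in terms of the global clock and spaced by enough $\Delta$'s (the synchronous protocol uses $7\Delta$ or $9\Delta$ windows, with a $2\Delta$ key-request sub-phase for self-leaders), so $3.5$ steps of leader-to-all/all-to-leader communication complete within the window; I would spell out that this spacing is exactly what makes the hypothesis hold. The one subtlety is the key-learning step: Lemma~\ref{lem:conditionalProgress} already internalizes (via Lemma~\ref{lem:validKey}) that the honest leader, by querying for keys and waiting for $n-2t$ replies, recovers a \emph{KEY} up-to-date enough to unlock all honest parties, so I can invoke it as a black box. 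Thus the proof reduces to the case analysis on "first non-decided honest leader" plus one appeal to Lemma~\ref{lem:conditionalProgress}, and the rest is routine.
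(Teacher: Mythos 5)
Your proof is correct and takes essentially the same approach as the paper: both arguments rest on the fact that every honest party is the leader of one pre-defined LBV instance and then invoke Lemma~\ref{lem:conditionalProgress}. The paper's version is just slightly more direct --- it takes the supposedly non-deciding honest party $p$ itself as the leader of its own LBV (since $p$ never decides, it certainly has not decided before that instance), which makes your case analysis on the ``first honest leader that has not decided'' unnecessary.
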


\begin{proof}

Assume by a way of contradiction that some honest party $p$ does not
decide.
Let $lbv$ be an LBV instance in the synchronous part in which
$p$ is the leader.
By Lemma~\ref{lem:conditionalProgress}, all honest parties decide at
the end of $lbv$.
A contradiction.

\end{proof}

\begin{lemma}
\label{lem:moreThanT1}

If $t+1$ honest parties decide in the synchronous part of a run of our
optimistic protocol, then all honest parties eventually decide.

\end{lemma}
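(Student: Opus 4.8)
The plan is to follow an arbitrary honest party through the three phases that \emph{Optimistic-propose} executes (Algorithm~\ref{alg:optimistic}) --- \emph{tryOptimistic}, \emph{help\&tryHalting}$(n+1)$, and \emph{fallback}$(n+2)$ --- and to show that it decides in one of them.

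First I would record what the hypothesis buys. By assumption there is a set $H$ of at least $t+1$ honest parties that decide during \emph{tryOptimistic}; each such party set \emph{COMMIT} to a genuine commit certificate during the synchronous part, and since \emph{COMMIT} is only ever written when it equals $\bot$ or by \emph{updateState} with an authentic certificate, every party of $H$ permanently holds a valid commit certificate from that moment on; by the Agreement property proved above, all these certificates carry the same value $v$. Fix any honest party $p$; if $p$ decides during the synchronous part we are done, so assume it does not, hence $p$ finishes \emph{tryOptimistic} with $\emph{COMMIT}=\bot$.

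Second I would analyze \emph{help\&tryHalting}$(n+1)$. Since $p$ enters it with $\emph{COMMIT}=\bot$, it sends a signed $\textsc{helpRequest}$ tagged $n+1$ to all parties. Because the synchronous part has a fixed, a-priori-known duration, every $q\in H$ has already decided by the time $p$'s request is sent, and therefore certainly by the time it is delivered; $q$'s $\textsc{helpRequest}$ handler then answers with ``$\textsc{helpReply},n+1,\emph{COMMIT}$'' where $\emph{COMMIT}$ is a valid certificate, and on delivery $p$'s \emph{check\&updateCommit} accepts it and $p$ decides. Thus $p$ decides unless it first leaves \emph{help\&tryHalting}$(n+1)$ for the fallback, which can only happen after some honest party assembled a valid $\textsc{complain}$ message and echoed it; but then the echo reaches every honest party, so every honest party --- in particular every $q\in H$, otherwise blocked waiting for $\emph{HALT}=\emph{false}$ --- eventually sets $\emph{HALT}$ to false, returns from \emph{help\&tryHalting}$(n+1)$, and enters \emph{fallback}$(n+2)$.

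Third and finally I would argue decision inside the fallback. In \emph{fallback}$(n+2)$ every honest party first runs $\mathsf{wave}(n+2)$ --- which terminates for every honest party by the barrier-synchronization and leader-election machinery of Algorithm~\ref{alg:barrierAndLE} used in the asynchronous analysis --- and then \we{n+2}, in which it broadcasts ``$\textsc{exchange},n+2,\emph{KEY},\emph{VALUE},\emph{COMMIT}$'' and waits for $n-t$ such messages from distinct parties. Every $q\in H$ sends this message with a valid commit certificate in \emph{COMMIT}; since $|H|\geq t+1$ and $(t+1)+(n-t)=n+1>n$, the $n-t$ distinct senders that $p$ collects include at least one member of $H$, so $p$ receives an $\textsc{exchange}$ message carrying a valid commit certificate and decides in \emph{check\&updateCommit}. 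Combining the three cases, every honest party decides, which proves the lemma. The step I expect to be the main obstacle is making this last paragraph fully rigorous: one must confirm that once any honest party enters the fallback \emph{all} honest parties (including the already-decided members of $H$) do so, and that $\mathsf{wave}(n+2)$ completes for each of them so that they all reach the very first \we{n+2} round on which $p$ is waiting, at which point the quorum-intersection count applies; the remaining ingredients are routine quorum counting and unforgeability of threshold signatures.
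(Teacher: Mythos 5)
Your proof is correct, and its core is the same as the paper's: the paper's entire argument is your second paragraph --- an undecided honest party sends a \textsc{helpRequest} to all, each of the $t+1$ already-decided honest parties answers with a \textsc{helpReply} carrying its valid commit certificate, and the requester decides in \emph{check\&updateCommit} upon delivery. Where you diverge is in treating ``$p$ leaves \emph{help\&tryHalting} for the fallback before the replies arrive'' as a separate case needing the \textsc{exchange}-quorum intersection argument inside \emph{fallback}$(n+2)$. That case is actually vacuous: the \textsc{helpReply} handler is a global message handler, not code inside the procedure, so by reliable delivery it eventually fires and $p$ decides no matter where $p$ has advanced to in the protocol --- this is exactly what the paper's two-sentence proof implicitly relies on. Your extra route is nonetheless sound (all parties of $H$ do reach \we{n+2} once a valid \textsc{complain} is echoed, the wave terminates by the VABA/ACE liveness argument, and $(n-t)+(t+1)>n$ forces one exchange message from $H$), and it buys robustness against a reading of the pseudocode in which handlers are scoped to procedures; the cost is that it drags in the fallback's liveness machinery, which the lemma does not otherwise need.
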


\begin{proof}

By the code of the \emph{help\&tryHalting} procedure, any party $p$
that does not decide in the synchronous part of the protocol sends an
help request to all parties and waits for $n-t$ to reply.
Since $t+1$ honest parties decided in the synchronous part before
invoking \emph{help\&tryHalting}, then $p$ gets a valid commit
certificate and decides as well.

\end{proof}

\begin{lemma}
\label{lem:lessThanT1}

If less than $t+1$ honest parties decide in the synchronous part of a
run of our optimistic protocol, then all honest parties eventually
move to the asynchronous fallback part.

\end{lemma}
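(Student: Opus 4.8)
The plan is a straightforward counting argument combined with the reliability of the links. Recall from Algorithm~\ref{alg:optimistic} that immediately after the synchronous part (\emph{tryOptimistic}) every honest party invokes \emph{help\&tryHalting}$(n+1)$, and that inside this procedure (Algorithm~\ref{alg:halting}) a party sends a signed $\textsc{helpRequest}$ to everyone precisely when its \emph{COMMIT} variable is still $\bot$ at that moment. Since \emph{help\&tryHalting}$(n+1)$ is the first statement executed after \emph{tryOptimistic}, a party has $\emph{COMMIT} = \bot$ when it reaches the $\textsc{helpRequest}$ test if and only if it did not decide in the synchronous part.

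First I would bound the number of honest parties that send a $\textsc{helpRequest}$. There are at least $n - t$ honest parties, and by assumption at most $t$ of them decided in the synchronous part, so at least $n - 2t$ honest parties enter \emph{help\&tryHalting} with $\emph{COMMIT} = \bot$ and therefore broadcast a valid $\textsc{helpRequest}$ message for sequence number $n+1$. Since $n \geq 3t+1$ we have $n - 2t \geq t + 1$, so at least $t+1$ honest parties send a valid $\textsc{helpRequest}$.

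Next I would show that every honest party sends a $\textsc{complain}$ message. By the reliability of the links, every honest party eventually delivers the $\geq t+1$ valid $\textsc{helpRequest}$ messages of the honest non-deciders; each such message passes \emph{share-validate} by the threshold-signature assumption, so the receiving party's set $S_{\emph{help}}$ reaches size $t+1$, at which point it assembles the threshold signature $\nu$ and broadcasts ``$\textsc{complain}, n+1, \nu$'' to all parties. In particular at least one honest party broadcasts a valid $\textsc{complain}$ message. Finally, again by link reliability, every honest party eventually delivers a valid $\textsc{complain}$ message; the \emph{threshold-validate} check succeeds, the party echoes the $\textsc{complain}$ and sets $\emph{HALT} \gets \emph{false}$, which releases it from the \textbf{wait until} $\emph{HALT} = \emph{false}$ inside \emph{help\&tryHalting}. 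Hence \emph{help\&tryHalting}$(n+1)$ returns at every honest party, and each of them proceeds to execute \emph{fallback}$(n+2)$, i.e., moves to the asynchronous fallback part.

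The only delicate point is the exact moment at which \emph{COMMIT} is inspected: one has to make sure that a party counted as a non-decider in the synchronous part still has $\emph{COMMIT} = \bot$ when it reaches the $\textsc{helpRequest}$ test in \emph{help\&tryHalting}$(n+1)$. This holds because the test and the broadcast are executed synchronously at the start of the procedure, before any $\textsc{helpReply}$ can be processed within this invocation, and no other code path updates \emph{COMMIT} between the end of \emph{tryOptimistic} and that test. Everything else is routine propagation via reliable links, so I do not expect a genuine obstacle here.
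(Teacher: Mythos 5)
Your proof is correct and follows essentially the same route as the paper's: count that at least $n-2t \geq t+1$ honest non-deciders broadcast $\textsc{helpRequest}$, so every honest party eventually accumulates $t+1$ valid shares, forms and broadcasts a $\textsc{complain}$, and is released from the wait to enter the fallback. Your additional remark about the moment at which \emph{COMMIT} is inspected is a fine detail the paper leaves implicit, but it does not change the argument.
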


\begin{proof}

Since less than $t+1$ honest parties decided before invoking
\emph{help\&tryHalting}, than at least $t+1$ honest parties send an
$\textsc{helpRequest}$ message to all other parties.
Thus all honest parties eventually get $t+1$ help replay, combine them
to a $\textsc{complain}$ message, send it to all other parties, and move to the
fallback part.

\end{proof}

\begin{lemma}
\label{lem:termination}

Our optimistic byzantine agreement protocol, which is given in
Algorithms~\ref{alg:state}, \ref{alg:LBV-API},
\ref{alg:LBV-messages}, \ref{alg:synch}, \ref{alg:synchauxiliary},
\ref{alg:asynch}, \ref{alg:barrierAndLE}, \ref{alg:optimistic} and
\ref{alg:halting}, satisfies termination in all synchronous runs and
provide probabilistic termination in all asynchronous runs.

\end{lemma}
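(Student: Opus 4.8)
The plan is to reduce the statement to the lemmas already established in this subsection and then close the one remaining gap, the termination of the fallback part. First observe that \emph{tryOptimistic} always returns within $O(n\Delta)$ time, since every wait inside it is on a fixed timeout, so in every run every honest party reaches the call $\emph{help\&tryHalting}(n+1)$. I would then branch on how many honest parties have decided (equivalently, hold a commit certificate) when they invoke $\emph{help\&tryHalting}(n+1)$. If at least $t+1$ honest parties have, Lemma~\ref{lem:moreThanT1} immediately gives that every honest party eventually decides, in any run type, and we are done; note this case covers synchronous runs, where by Lemma~\ref{lem:synchDecide} all $n-t$ honest parties decide in the synchronous part. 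The interesting case is when fewer than $t+1$ honest parties decided in the synchronous part: by Lemma~\ref{lem:lessThanT1} every honest party eventually leaves $\emph{help\&tryHalting}(n+1)$ and enters $\emph{fallback}(n+2)$, so it remains to prove that $\emph{fallback}$ terminates — with probability $1$ in asynchronous runs and deterministically after GST in eventually synchronous runs (which subsumes synchronous runs as the special case $\mathrm{GST}=0$).

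Next I would show that inside $\emph{fallback}$ no honest party is ever permanently blocked within a single iteration of the \textbf{while} loop, so the loop keeps advancing until termination. The $\mathsf{startView}$ invocations are non-blocking; the barrier-synchronization primitive returns because at least $n-t\ge n-2t$ LBV instances in a wave have honest leaders and hence complete once their messages are delivered (the completion guarantee inherited from the VABA/ACE reconstruction, using that an honest leader learns an unlocking key, cf.\ Lemma~\ref{lem:validKey}); the leader-election primitive returns on $t+1$ coin shares; \emph{exchangeState} returns on $n-t$ exchange messages; and each call $\emph{help\&tryHalting}(sq)$ either produces a valid $\textsc{complain}$ message — which happens whenever at least $t+1$ honest parties still lack a commit certificate, and which after GST reaches and is echoed by every honest party within $\Delta$, unblocking all of them — or else at most $t$ honest parties lack a commit certificate, in which case each of them obtains one from the $\textsc{helpReply}$ of one of the at least $t+1$ honest parties that already hold one, so every honest party decides and then deliberately stays blocked there, i.e.\ halts. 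In that last sub-case termination already holds; otherwise the loop advances, and I may assume it does so forever.

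With the loop advancing, the two run-specific arguments are as follows. In an asynchronous run I would invoke the standard VABA wave argument: in each wave the elected LBV is, independently and with probability at least $1/3$, one of the $\ge n-2t$ completed instances, and when a completed instance is elected at least $t+1$ honest parties hold a commit certificate for its value, so after the ensuing \emph{exchangeState} every honest party obtains a commit certificate via \emph{check\&updateCommit} and decides; hence the number of waves until success is dominated by a geometric random variable and is finite with probability $1$. In an eventually synchronous run I would additionally use the gray $\emph{trySynchrony}(sq+1,\mathit{RRleader},8\Delta)$ steps: $\mathit{RRleader}$ cycles through all $n$ parties, so there is an iteration occurring after GST in which $\mathit{RRleader}$ is honest and has not yet decided; the \emph{help\&tryHalting} call immediately preceding that $\emph{trySynchrony}$ synchronizes all honest parties to within $\Delta$ of one another, so they all invoke $\mathsf{startView}$ of the pre-defined LBV within a $\Delta$ window, and the $8\Delta$ timeout then suffices for the honest leader — which by Lemma~\ref{lem:validKey} can learn, via the preceding \emph{exchangeState}, a key that unlocks every honest party — to complete its $3.5$ leader-to-all/all-to-leader steps and broadcast the commit certificate, so the conditional-progress property of the LBV composition applies and all honest parties decide.

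The step I expect to be the main obstacle is this eventually synchronous fallback argument: I must carefully track, through \emph{wave}, \emph{exchangeState} and \emph{help\&tryHalting}, that once a $\textsc{complain}$ has propagated after GST no honest party is more than $\Delta$ behind another when it enters the pre-defined $\emph{trySynchrony}$ LBV, and that $8\Delta$ is then genuinely enough for three round trips plus the final commit broadcast given the residual $\Delta$ skew. The completion claim for honest-led LBVs inside a wave (so that the barrier does not stall, e.g.\ for the very first wave of $\emph{fallback}$, which has no preceding \emph{exchangeState}) is the other point that needs care, but it is exactly the property borrowed from the VABA/ACE construction and combined with Lemma~\ref{lem:validKey}.
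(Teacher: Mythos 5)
Your proof is correct and follows essentially the same route as the paper's: the identical case split on whether at least $t+1$ honest parties decide in the synchronous part, discharged by Lemmas~\ref{lem:synchDecide}, \ref{lem:moreThanT1} and~\ref{lem:lessThanT1}, with the remaining fallback-termination claim handled last. The only difference is that where the paper simply cites the termination proofs of VABA~\cite{VABA} and~\cite{ace} for the fallback, you spell that argument out inline (non-blocking of each wave phase, the $1/3$-per-wave success probability, and the round-robin \emph{trySynchrony} for eventual synchrony), which is a more self-contained rendering of the same argument rather than a different one.
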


\begin{proof}

Let $r$ be a run of the protocol and consider consider 3 cases:
\begin{itemize}
  
  \item $r$ is synchronous. A (standard) termination is guaranteed by 
  Lemma~\ref{lem:synchDecide}.
  
  \item More than $t+1$ honest parties decide in the synchronous part
  of $r$. The lemma follows from Lemma~\ref{lem:moreThanT1}.
  
  \item Less than $t+1$ honest parties decide in the synchronous part
  of $r$. By Lemma~\ref{lem:lessThanT1}, all honest parties move to
  the asynchronous fallback. The Lemma follows from the termination
  proof in VABA~\cite{VABA} and~\cite{ace}.
  
\end{itemize}

\end{proof}

\subsection{Communication complexity.}

In this section we prove that our protocol has an optimal adaptive 
synchronous communication complexity and and optimal worst case
asynchronous communication.

\begin{lemma}
\label{lem:synchPartCost}

The communication cost of the synchronous part in synchronous runs of
our optimistic protocol is $O(ft + t)$.

\end{lemma}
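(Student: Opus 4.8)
The plan is to bound the words sent in the synchronous part by accounting separately for the $n$ pre-defined LBV instances and the $\textsc{keyRequest}$/$\textsc{keyReply}$ rounds that precede each one. First I would recall from Section~\ref{subsub:LBV} that a single LBV instance costs $O(t)$ words when its leader invokes $\mathsf{startView}$, and costs $0$ words when the leader does not invoke it (since in that case the leader sends nothing, and the message handlers of the other parties only fire in response to the leader's messages). So the entire argument reduces to bounding how many LBV instances actually incur the $O(t)$ cost, together with the cost of the key-exchange rounds.

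The key structural facts I would invoke are Lemma~\ref{lem:synchDecide} and Lemma~\ref{lem:conditionalProgress}: in a synchronous run, the first LBV instance whose leader is honest produces a commit certificate at every honest party, so \emph{every honest leader that comes later already has $\emph{COMMIT} \neq \bot$} and therefore, by the code of \emph{tryOptimistic}, does not invoke $\mathsf{startView}$ and does not even send its $\textsc{keyRequest}$ message. Hence among the $n$ LBV instances, the only ones contributing nonzero communication are: (i) the instances with byzantine leaders — there are at most $f$ of them, each costing $O(t)$ words for the honest parties' replies to the leader's messages plus $O(t)$ words for the honest parties' $\textsc{keyReply}$ responses to that leader's $\textsc{keyRequest}$ — and (ii) at most one instance with an honest leader (the first honest leader who has not yet decided), again costing $O(t)$. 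Summing, the synchronous part sends $O(ft) + O(t) = O(ft+t)$ words. I would also note the first instance ($p_1$'s) is a special case handled identically: either $p_1$ is honest, in which case it is the ``first honest leader'' of case (ii), or it is byzantine and falls under case (i).

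The main obstacle — really the only subtle point — is arguing that a byzantine leader cannot inflate the cost of ``its'' LBV instance beyond $O(t)$: a byzantine leader could send arbitrarily many malformed messages, but each honest party replies to the leader at most once per step (the $\textsc{keyShare}$, $\textsc{lockShare}$, $\textsc{commitShare}$ handlers fire ``for the first time'' only, and the step-transition handlers each fire once since the party's $\textbf{keyProof}$/$\textbf{lockProof}$ are set once and the guards reference $\emph{active}$), so each honest party sends $O(1)$ words into a byzantine-led LBV and $O(1)$ words in reply to that leader's $\textsc{keyRequest}$, giving $O(t)$ words per byzantine leader. The other point worth spelling out is that honest parties that have \emph{not} decided by the time $p_1$'s (byzantine-led) LBV ends are irrelevant for the cost bound before the first honest leader acts — their only extra messages are the $\textsc{keyReply}$ responses already counted — and once the first honest leader has acted, Lemma~\ref{lem:conditionalProgress} ensures all honest parties have decided, so no honest party issues a $\textsc{keyRequest}$ from then on and all subsequent LBVs are free. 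I would close by observing this is the bound claimed, matching the lower bound of Theorem~\ref{theorem:S}.
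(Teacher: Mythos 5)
Your proposal is correct and follows essentially the same route as the paper's proof: both charge $O(t)$ to each of the at most $f$ byzantine-led LBV instances (including the key-learning phase), observe via Lemma~\ref{lem:conditionalProgress} that at most one honest leader ever drives progress so all other honest-led instances cost $0$, and sum to $O(ft+t)$. Your additional justification that a byzantine leader cannot inflate its instance's cost beyond $O(t)$ (via the once-per-step reply guards) is a welcome elaboration of a step the paper merely asserts, but it is not a different argument.
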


\begin{proof}

Consider a synchronous run.
The communication cost of an LBV instance (with byzantine or honest
leader) plus the leader-to-all all-to-leader key learning phase is at
most $O(t)$ and the communication cost of an LBV instance with an
honest leader that does not drive progress since it has already
decided before is $0$ (honest parties only reply to leaders
messages).
By Lemma~\ref{lem:conditionalProgress} all honest parties decide in
the first LBV instance with an honest leader that drive progress.
Therefore, there is at most $1$ honest leader that drive progress in
the LBV in which it acts as the leader.
Thus, the total communication cost of all LBVs with honest leaders is
$O(ft)$.
Hence, since every byzantine leader can make honest parties pay at
most $O(t)$ communication cost in the LBV instance in which the
byzantine party is the leader, we get to a total communication cost
of $O(ft +t)$.

\end{proof}

\begin{lemma}
\label{lem:haltingCost}

The communication cost of the \emph{help\&tryHalting} procedure
in synchronous runs of our optimistic protocol is $O(ft)$.

\end{lemma}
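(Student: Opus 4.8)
The plan is to lean on Lemma~\ref{lem:synchDecide} (together with Lemma~\ref{lem:conditionalProgress}): in a synchronous run every honest party decides inside the synchronous part, so each honest party reaches \emph{help\&tryHalting} with $\emph{COMMIT} \neq \bot$. Reading the \emph{help\&tryHalting} pseudocode, this immediately yields that \emph{no} honest party sends a $\textsc{helpRequest}$ message in a synchronous run, since the guard $\emph{COMMIT} = \bot$ fails for all of them. Everything else follows from tracing which messages can then still be sent.

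Next I would show that no honest party ever sends or echoes a $\textsc{complain}$ message. The only share signatures on $\lr{\textsc{helpRequest}, sq}$ that can exist are those produced by the at most $f \le t$ byzantine parties, because honest parties (having $\emph{COMMIT}\neq\bot$) produce none. A valid $\textsc{complain}$ carries a threshold signature on $\lr{\textsc{helpRequest}, sq}$, which requires $t+1$ distinct valid shares; since only $f \le t$ such shares exist, no valid $\textsc{complain}$ can be formed, so the \emph{threshold-validate} guard in the $\textsc{complain}$ handler never passes at an honest party — hence no honest party echoes a $\textsc{complain}$ nor sets $\emph{HALT}$ to $\emph{false}$. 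Symmetrically, no honest party collects $t+1$ valid help-request shares, so the $\textsc{complain}$-sending branch is never taken. Consequently the only messages an honest party sends during \emph{help\&tryHalting} in a synchronous run are $\textsc{helpReply}$ messages, each a direct response to a valid $\textsc{helpRequest}$ from a byzantine party.

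To finish, I would bound the number of such $\textsc{helpReply}$ messages. Adopting the standard convention that an honest party acts on each distinct received message once (consistent with the ``for the first time'' annotations used elsewhere in the pseudocode), an honest party replies at most once per byzantine sender, and since a byzantine party has exactly one valid help-request share it can elicit at most one reply per honest party; so each honest party sends at most $f$ $\textsc{helpReply}$ messages. Each $\textsc{helpReply}$ carries only $sq$ and a $\emph{COMMIT}$ tuple, i.e.\ $O(1)$ words. Summing over the $n = O(t)$ honest parties gives a total of $O(ft)$ words.

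The only delicate point is ruling out a byzantine party inflating the cost by resending the same valid $\textsc{helpRequest}$ repeatedly and drawing a fresh reply each time; this is dispatched by the duplicate-suppression convention above (equivalently, one can have the $\textsc{helpRequest}$ handler record which senders it has already answered, exactly as the LBV handlers do). Apart from that, the argument is a straightforward reading of the \emph{help\&tryHalting} code combined with Lemma~\ref{lem:synchDecide} and the bound $f \le t < n/3$, so I expect no real obstacle.
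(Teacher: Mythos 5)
Your proof is correct and follows essentially the same route as the paper's: invoke Lemma~\ref{lem:synchDecide} to conclude no honest party sends $\textsc{helpRequest}$, deduce that no valid $\textsc{complain}$ can be formed, and charge $O(t)$ worth of $\textsc{helpReply}$ messages to each of the $f$ byzantine parties. Your explicit handling of duplicate $\textsc{helpRequest}$ messages (which the handler in Algorithm~\ref{alg:halting} does not annotate with ``for the first time'') is a detail the paper's proof silently assumes, and is a worthwhile addition.
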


\begin{proof}

Consider a synchronous run $r$.
By Lemma~\ref{lem:synchDecide}, all honest parties decide in the
synchronous part of $r$.
Thus, no honest party sends an $\textsc{helpRequest}$ message in the 
\emph{help\&tryHalting} procedure and it is impossible to generate a
valid $\textsc{complain}$ message.
An $\textsc{helpRequest}$ by a byzantine party causes all honest party
to reply, which cost $O(t)$ in communication cost.
Therefore, the total communication cost of the \emph{help\&tryHalting}
procedure in synchronous runs is $O(ft)$.

\end{proof}

\begin{lemma}
\label{lem:synchCost}

The synchronous communication cost of our optimistic byzantine
agreement protocol is $O(ft + t)$.

\end{lemma}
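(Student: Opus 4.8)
The plan is to decompose the synchronous communication cost of the optimistic protocol into the costs of its constituent phases, as laid out in Algorithm~\ref{alg:optimistic}: the call to $\mathsf{tryOptimistic()}$ (the synchronous part), the call to $\mathsf{help\&tryHalting}(n+1)$, and the call to $\mathsf{fallback}(n+2)$. Since we are only bounding cost in synchronous runs, the key qualitative fact to establish first is that in such runs the protocol never actually enters the fallback part, so the third term contributes nothing.

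First I would invoke Lemma~\ref{lem:synchDecide}: in every synchronous run all honest parties decide in the synchronous part, i.e.\ before invoking $\mathsf{help\&tryHalting}$. In particular all honest parties have $\emph{COMMIT} \neq \bot$ at that point, so by the code of $\mathsf{help\&tryHalting}$ no honest party sends a $\textsc{helpRequest}$ message. Consequently at most $f \leq t$ distinct valid $\textsc{helpRequest}$ shares can ever exist, so no honest party ever assembles $t+1$ of them, no valid $\textsc{complain}$ threshold signature can be produced, and $\emph{HALT}$ never becomes $\emph{false}$ at any honest party. But honest parties that have already decided do not block: since more than $t+1$ honest parties decided, Lemma~\ref{lem:moreThanT1} (equivalently, the Termination argument) guarantees every honest party decides; the subtlety that they must also \emph{return} from $\mathsf{help\&tryHalting}$ to proceed is exactly why we need the observation that in synchronous runs every honest party's own local state already satisfies the exit condition — here I would point out that an honest party that has decided does not wait in $\mathsf{help\&tryHalting}$, so it proceeds, but then blocks in $\mathsf{fallback}$'s first wave only if other honest parties also participate; since no honest party that decided will meaningfully drive the fallback, no honest-to-honest messages of the fallback are sent. (This is the step I expect to need the most care: arguing cleanly that "honest parties proceed past $\mathsf{help\&tryHalting}$ but generate zero honest-to-honest traffic in $\mathsf{fallback}$" — one wants to say that although honest parties nominally enter $\mathsf{fallback}(n+2)$, they have all already decided and, per the protocol's halting discipline, the fallback waves among honest parties cost $0$, or alternatively that the lemma statement implicitly counts only runs where the protocol has terminated and one may stop accounting once all honest parties have halted.)

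Then I would simply add the two remaining contributions. By Lemma~\ref{lem:synchPartCost}, the synchronous part (the $\mathsf{tryOptimistic()}$ call, including all LBV instances and the key-learning query/reply rounds) costs $O(ft + t)$ words in synchronous runs. By Lemma~\ref{lem:haltingCost}, the $\mathsf{help\&tryHalting}$ procedure costs $O(ft)$ words in synchronous runs, since the only traffic is $O(t)$ reply messages triggered by each of the at most $f$ byzantine parties' $\textsc{helpRequest}$ messages. Summing, $O(ft + t) + O(ft) + 0 = O(ft + t)$, which is the claimed bound.

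In short, the proof is a three-line accounting argument resting on three already-proved lemmas; the only genuine content is confirming that synchronous runs never incur the quadratic fallback cost, which follows from Lemma~\ref{lem:synchDecide} together with the threshold-$t+1$ guard on $\textsc{complain}$ messages in $\mathsf{help\&tryHalting}$.
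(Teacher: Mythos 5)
Your overall decomposition and accounting match the paper's proof exactly: Lemma~\ref{lem:synchDecide} gives that all honest parties decide in the synchronous part, hence no honest party sends $\textsc{helpRequest}$, hence no valid $\textsc{complain}$ threshold signature (which needs $t+1$ shares) can ever be formed, and the total cost is the sum of the bounds from Lemmas~\ref{lem:synchPartCost} and~\ref{lem:haltingCost}. That is the whole argument, and your first and last paragraphs contain it.

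However, the middle of your proposal gets the halting mechanism backwards, and the error matters. You correctly observe that $\emph{HALT}$ never becomes $\emph{false}$ at any honest party, but then assert that decided honest parties ``do not block'' and proceed past $\mathsf{help\&tryHalting}$ into $\mathsf{fallback}(n+2)$. The code of $\mathsf{help\&tryHalting}$ ends with \textbf{wait} until $\emph{HALT} = \emph{false}$ unconditionally --- there is no early exit for parties with $\emph{COMMIT} \neq \bot$ --- so if $\emph{HALT}$ stays $\emph{true}$ forever, every honest party blocks there forever. That is precisely the intended ``halting'': the party has already decided and simply stops sending messages, and $\mathsf{fallback}$ is never invoked. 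Your fallback position --- that honest parties do enter $\mathsf{fallback}$ but generate no honest-to-honest traffic --- would not rescue the bound: the $\mathsf{wave}$ and $\mathsf{exchangeState}$ procedures send messages unconditionally, without consulting $\emph{COMMIT}$, so an honest party that actually entered the fallback would incur $O(t^2)$ cost. The correct (and paper's) statement is simply that no honest party ever returns from $\mathsf{help\&tryHalting}(n+1)$ in a synchronous run, so the fallback contributes zero; with that correction your proof is the paper's proof.
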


\begin{proof}

By Lemma~\ref{lem:synchDecide}, all honest parties decide in the
synchronous part of $r$ and
thus no honest party sends an $\textsc{helpRequest}$ message in the 
\emph{help\&tryHalting} procedure.
Therefore, it is impossible to generate a valid $\textsc{complain}$
message, and thus no honest party moves to the fallback part.
The lemmas follows from Lemmas~\ref{lem:synchPartCost}
and~\ref{lem:haltingCost}.

\end{proof}

\begin{lemma}
\label{lem:asynchCost}

The asynchronous communication cost of our optimistic byzantine
agreement protocol is $O(t^2)$ in the worst case.

\end{lemma}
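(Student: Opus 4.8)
The plan is to bound the expected number of words sent among honest parties in an arbitrary asynchronous run of the optimistic protocol, by splitting the run into its three phases---the synchronous part (\textsf{tryOptimistic}), the single \textsf{help\&tryHalting}$(n+1)$ call, and the asynchronous fallback \textsf{fallback}$(n+2)$---and charging each phase separately.

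First I would handle the synchronous part. Even though the run is asynchronous, \textsf{tryOptimistic} runs exactly $n$ pre-defined LBV instances plus $n-1$ key-learning rounds, each LBV instance costing $O(t)$ words and each key-learning round costing $O(t)$ words (one leader-to-all, one all-to-leader exchange). Since $n = O(t)$, this contributes at most $O(nt) = O(t^2)$ words, deterministically. Next, the single \textsf{help\&tryHalting} invocation consists of at most one \textsc{helpRequest} broadcast per honest party ($O(n)$ messages of constant size each, total $O(n^2)=O(t^2)$), at most one \textsc{helpReply} per (requester, replier) pair ($O(n^2)$ words), at most one \textsc{complain} broadcast per party that receives a valid complaint (echoed once, so $O(n^2)$ words total because each honest party echoes at most once). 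Hence this phase costs $O(t^2)$ words deterministically. The remaining work is to bound the cost of \textsf{fallback}$(n+2)$ in expectation.

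For the fallback, I would invoke the per-wave accounting already established in Section~\ref{sub:algAsynch}: each wave costs $O(n^2)$ words (the $n$ LBV instances, the Barrier-synchronization primitive, and the Leader-election primitive are each $O(n^2)$), the inter-wave \textsf{exchangeState} costs $O(n^2)$, each inter-wave \textsf{help\&tryHalting} costs $O(n^2)$ as argued above, and the inter-wave deterministic \textsf{trySynchrony} call (the gray lines) adds a single $O(t)$-word LBV instance. So each iteration of the \textsf{while} loop in \textsf{fallback} costs $O(n^2) = O(t^2)$ words. Since each wave independently selects a \emph{completed} LBV instance with probability at least $1/3$ (the property recalled when describing the wave mechanism), after which the ensuing \textsf{exchangeState} delivers a commit certificate to all honest parties and they all decide and then halt inside the following \textsf{help\&tryHalting}, the number of loop iterations executed before every honest party halts is dominated by a geometric random variable with success probability $1/3$, whose expectation is $3$. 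By linearity of expectation, the expected fallback cost is $3 \cdot O(t^2) = O(t^2)$. Adding the deterministic $O(t^2)$ from the first two phases gives a total expected cost of $O(t^2)$.

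The one subtlety---and the main obstacle---is justifying that honest parties really do \emph{halt} (stop sending messages) within a bounded number of waves in expectation, so that the geometric bound on the number of \textsf{while}-iterations is legitimate rather than merely bounding the number of waves until the first decision. Here I would rely on the \textsf{help\&tryHalting} procedure placed after each wave: once a wave selects a completed LBV and the subsequent \textsf{exchangeState} round distributes the commit certificate, every honest party enters the next \textsf{help\&tryHalting} with $\emph{COMMIT} \neq \bot$, so none of them sends a \textsc{helpRequest}; consequently no honest party ever sets $\emph{HALT}$ to false via a genuine complaint chain originating from honest parties, and---crucially---a byzantine party can trigger at most $O(t)$ additional \textsc{helpReply} words per such procedure and at most a constant number of \textsc{complain} echoes, which is a one-time additive $O(t^2)$ contribution across all parties and does not affect the asymptotics. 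Thus after the first successful wave all honest parties cease sending messages within $O(1)$ more rounds, the loop terminates for them, and the expected total is $O(t^2)$ as claimed.
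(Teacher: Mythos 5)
Your proof is correct and follows essentially the same decomposition as the paper's: bound the synchronous part at $O(nt)=O(t^2)$ words, bound the \emph{help\&tryHalting} invocations at $O(t^2)$ words, and appeal to the expected-three-waves analysis of the fallback. The only difference is that the paper's appendix proof simply cites the VABA analysis for the fallback term, whereas you inline the per-wave $O(n^2)$ accounting, the geometric argument, and the halting subtlety already sketched in Section~\ref{sub:algAsynch}; both are fine.
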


\begin{proof}

Consider an asynchronous run $r$.
We first prove that the worst case communication cost of the
synchronous part and the \emph{help\&tryHalting}
procedure is $O(t^2)$:
\begin{itemize}
  
  \item Synchronous part. The synchronous part consists of $n$ LBV
  instances with a one-to-all and all-to-one communication phase in
  between. Since the communication cost of the LBV building block is
  at most $O(t)$, we get that the total communication cost of the
  synchronous part is $O(nt) = O(t^2)$.
  
  \item The\emph{help\&tryHalting} procedure. Every honest party sends
  at most one $\textsc{helpRequest}$ message, $\textsc{helpReplay}$
  message, and $\textsc{complain}$ message.
  Therefore, since each of the messages contains a constant number of
  words, we get that the worst case communication complexity of the
  \emph{help\&tryHalting} procedure is $O(t^2)$.
  
\end{itemize}

The Lemma follows from the communication cost analysis of the fallback
algorithm, which appears in VABA~\cite{VABA}.

\end{proof}

The next corollary follows directly from Lemmas~\ref{lem:synchCost}
and~\ref{lem:asynchCost}:

\begin{corollary}

The adaptive synchronous and worst case asynchronous communication
cost of our optimistic byzantine agreement protocol, which is given
in Algorithms~\ref{alg:state}, \ref{alg:LBV-API}, \ref{alg:LBV-messages}, \ref{alg:synch},
\ref{alg:synchauxiliary}, \ref{alg:asynch}, \ref{alg:barrierAndLE},
\ref{alg:optimistic} and \ref{alg:halting} is $O(ft + t)$ and
$O(t^2)$, respectively.

\end{corollary}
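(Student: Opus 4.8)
The plan is to assemble the corollary from the two lemmas that immediately precede it, which have already been engineered to isolate the two halves of the statement, so that only a one-line combination remains. For the first half I would invoke Lemma~\ref{lem:synchCost}: the synchronous communication cost of the optimistic protocol is $O(ft+t)$, which is exactly the claimed adaptive synchronous bound for every $0 \le f \le t$. It is worth recalling what that lemma rests on, since that is where the real content lives: (i) in a synchronous run every honest party already decides in the synchronous part (Lemma~\ref{lem:synchDecide}), so no honest party ever sends a \textsc{helpRequest}, no valid \textsc{complain} can be produced, and hence no honest party enters the fallback; (ii) the synchronous part itself costs $O(ft+t)$ because at most one honest leader actually drives an LBV to completion while each of the $\le f$ byzantine leaders forces only $O(t)$ extra words in its own LBV (Lemma~\ref{lem:synchPartCost}); and (iii) the \emph{help\&tryHalting} procedure contributes only $O(ft)$, since each byzantine complainer triggers at most $O(t)$ honest replies (Lemma~\ref{lem:haltingCost}).

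For the second half I would invoke Lemma~\ref{lem:asynchCost}: in the worst case the synchronous prefix plus \emph{help\&tryHalting} contribute $O(nt)=O(t^2)$ (at most $n$ LBV instances at $O(t)$ words each, plus one \textsc{helpRequest}/\textsc{helpReply}/\textsc{complain} per honest party), and the fallback is a reconstruction of VABA whose $O(t^2)$ expected cost is established in~\cite{VABA}; summing the three $O(t^2)$ terms gives $O(t^2)$. Because the two bounds are charged to disjoint accounts — synchronous runs never reach the fallback, and the asynchronous bound already pays for the worst-case synchronous prefix — no interaction argument is needed, and the corollary is literally the conjunction of Lemma~\ref{lem:synchCost} and Lemma~\ref{lem:asynchCost}.

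The one point requiring care is terminological bookkeeping: I would make explicit that "adaptive synchronous communication cost" in the corollary is the quantity bounded uniformly in $f$ by Lemma~\ref{lem:synchCost}, and that "worst case asynchronous communication cost" is the expected cost of an asynchronous run bounded by Lemma~\ref{lem:asynchCost}, so that the two lemmas plug in verbatim. I do not expect a genuine obstacle at the level of the corollary; the hard part is entirely upstream, namely proving that byzantine parties in a synchronous run cannot inflate the cost past $O(ft+t)$ — that each byzantine leader is confined to $O(t)$ in its own LBV, each byzantine complainer to $O(t)$ replies, and that honest leaders stop participating after the first successful LBV so their later instances are free (Lemma~\ref{lem:synchDecide} together with Lemmas~\ref{lem:synchPartCost} and~\ref{lem:haltingCost}). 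Once those are in hand, the corollary follows immediately.
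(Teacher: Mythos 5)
Your proposal is correct and matches the paper exactly: the corollary is stated there as following directly from Lemma~\ref{lem:synchCost} (for the $O(ft+t)$ synchronous bound) and Lemma~\ref{lem:asynchCost} (for the $O(t^2)$ asynchronous bound), which is precisely your one-line combination. The additional recap of the upstream lemmas is accurate but not needed at the level of the corollary itself.
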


\section{Threshold signatures}
\label{app:TS}

At the beginning of every execution, every party $p_i$ gets a private
function $\emph{share-sign}_i(m)$ from the dealer, which gets a
message $m$ and returns a
signature-share $\sigma_i$.
In addition, every party gets the following functions: (1)
$\emph{share-validate}(m,i,\sigma_i)$, which gets a message $m$, a party
identification $i$, and a signature-share $\sigma_i$, and returns
\emph{true} or \emph{false}; (2)
$\emph{threshold-sign}(\Sigma)$, which gets a set of signature-shares
$\Sigma$, and returns a threshold signature $\sigma$; and (3)
$\emph{threshold-validate}(m,\sigma)$, which gets a message $m$ and a
threshold signature $\sigma$, and returns \emph{true} or \emph{false}.
We assume that the above functions satisfy the following properties:

\begin{itemize}

%
  \item \emph{Share validation:} For all $i$, $1 \leq i \leq n$ and
  for every messages $m$, (1) $\emph{share-validate}(m,i,\sigma) = true$ if and only if $\sigma =
  \emph{share-sign}_i(m)$, and (2) if $p_i$ is
  honest, then it is infeasible for the adversary to compute
  $\emph{share-sign}_i(m)$.


   \item \emph{Threshold validation:} For every message $m$,
  $\emph{threshold-validate}(m,\sigma) = true$ if and only if
  $\sigma = \emph{threshold-sign}(\Sigma)$ s.t.\
  $|\Sigma| \geq n-t$ and for every $\sigma_i \in \Sigma$ there is a
  party $p_i$ s.t.\ $\emph{share-validate}(m,i,\sigma) = true$.


\end{itemize}

\end{document}